\tikzstyle{every node}=[circle,draw=black, inner sep=1.5pt,fill=white]
\newcommand{\setof}[2]{\{#1\;:\;#2\}}
\newcommand{\kmer}{$k$-mer\xspace}
\newcommand{\kmers}{$k$-mers\xspace}
\newcommand{\kpomers}{$(k+1)$-mers\xspace}
\newcommand{\pre}{{\rm pre}}
\newcommand{\suf}{{\rm suf}}
\newcommand{\spell}{{\rm spell}}
\renewcommand{\leq}{\leqslant}
\renewcommand{\geq}{\geqslant}
\newcommand{\ecoli}{{\it E.coli}\xspace}
\newcommand{\ec}{omnitig\xspace}
\newcommand{\ecs}{omnitigs\xspace}
\newcommand{\Ec}{Omnitig\xspace}
\newcommand{\model}{{\mathcal G}\xspace}
\newcommand{\sj}{Y-to-V\xspace}
\declaretheorem[name=Observation]{obs}
\newcommand{\dbec}{\text{DB}_{\text{ec}}\xspace}
\newcommand{\dbnc}{\text{DB}_{\text{nc}}\xspace}
\begin{document}
\title{Safe and complete contig assembly via omnitigs\footnote{A preliminary version of this paper appeared in 
RECOMB 2016}}
\author{Alexandru I. Tomescu\inst{1} \and Paul Medvedev\inst{2,3,4}}
\institute{$^1$Helsinki Institute for Information Technology HIIT, \\
Department of Computer Science, University of Helsinki\\P.O. Box 68, FI-00014, Helsinki, Finland\\E-mail: \texttt{tomescu@cs.helsinki.fi}, Fax: +358 9 876 4314\\ 
$^2$Department of Computer Science and Engineering, The Pennsylvania State University, USA\\
$^3$Department of Biochemistry and Molecular Biology, The Pennsylvania State University, USA\\
$^4$Genome Sciences Institute of the Huck,  The Pennsylvania State University, USA}
\maketitle

\begin{abstract}
		Contig assembly is the first stage that most assemblers solve when reconstructing a genome from a set of reads. Its output consists of contigs -- a set of strings that are promised to appear in any genome that could have generated the reads. From the introduction of contigs 20 years ago, assemblers have tried to obtain longer and longer contigs, but the following question remains: given a genome graph $G$ (e.g.~a de Bruijn, or a string graph), what are \emph{all} the strings that can be safely reported from $G$ as contigs? In this paper we 
	answer this question using a model where the genome is a circular covering walk.
	We also give a polynomial time algorithm to find such strings, which we call omnitigs. 
Our experiments 
show that 
omnitigs
are 66\% to 82\% longer on average than the popular unitigs, and 
29\% of dbSNP locations have more neighbors in omnitigs than in unitigs.

\textbf{Keywords:} genome assembly, contig assembly, safe and complete algorithm, graph algorithm, omnitig

\end{abstract}

\newpage

\section{Introduction}\label{sec:intro}
The genome assembly problem is to reconstruct the sequence of a genome using reads from a sequencing experiment.
It is one of the oldest bioinformatics problems; 
nevertheless, recent projects such as the Genome 10K have underscored the need to further improve assemblers~\cite{genome10k}.
Current algorithms face numerous practical challenges, 
including scalability, 
integration of new data types (e.g. PacBio), 
and representation of multiple alleles.
While handling these challenges is extremely important, 
assemblers do not produce optimal results even in very simple and idealized scenarios.
To address this, several papers have developed better theoretical underpinnings~\cite{IW95,M05,MGMB07,simpson10,pdbg,vyahhi12},
often resulting in improved assemblers in practice~\cite{PTW01,velvet,SGA,spades}.

In most theoretical studies, the assembly problem is formulated as finding one genomic reconstruction, i.e. 
a single string that represents the sequence of the genome.
However, 
the presence of repeats means that a unique genomic reconstruction usually does not exist.
In practice, assemblers instead 
output several strings, called \emph{contigs}, that are ``promised'' to occur in the genome. 
We refer to this restatement of the genome assembly problem as \emph{contig assembly}. 
Contigs can then be used to answer biological questions (e.g. about gene content) or perform comparative genomic analysis. 
When mate pairs are available, contigs can be fed to later assembly stages, such as scaffolding~\cite{besst,sspace,soapdenovo2} and then gap filling~\cite{salmela15,boetzer12}. 

Assemblers implement different strategies for finding contigs.
The common strategy is to find {\em unitigs}, an idea that can be traced back to 1995~\cite{KM95}.
Unitigs have the desired property that they can be mathematically proven to occur in \emph{all} possible genomic reconstructions, 
under clear assumptions on what ``genomic reconstruction'' means. 
We will refer to strings that satisfy such a property as being \emph{safe} (\Cref{def:safe-string}), and 
will say that a contig assembly algorithm is {\em safe} if it outputs only safe strings.
Though most assemblers have a safe strategy at their core,
they also incorporate heuristics to handle erroneous data and extend contig length
(e.g. bubble popping, tip removal, and path disambiguation).
Properties of such heuristics, however, are difficult to prove, and 
this paper will focus on core algorithms that are safe.

While the unitig algorithm is safe, it does not identify {\em all} possible safe strings (see \Cref{fig:all-algorithms}).
An improved safe algorithm was used in the EULER assembler~\cite{PTW01},
and further improvements were suggested based on iteratively simplifying the graph used for assembly~\cite{PTW01,MB09,jacksonthesis,kingsford10}. 
However, we will show that these algorithms still do not always output all the safe strings.
In fact, since the initial consideration of contig assembly 20 years ago, the fundamental question of finding \emph{all} the safe strings of a graph remains poorly studied. 

In this paper, we answer this question by giving a polynomial-time algorithm for outputting {\em all} the safe strings in the common genome graph models (de Bruijn and string graphs) when the genome is a circular covering walk (\Cref{sec:omnitig-alg}).
The key ingredient for this result is a graph-theoretic characterization of the walks that correspond to safe strings (\Cref{sec:omnitig-char}).
We call such walks {\em omnitigs} and our algorithm the {\em \ec algorithm}.
In our experiments on de Bruijn graphs built from data simulated according to our assumptions, 
maximal omnitigs are on average $66\%$ to $82\%$ longer than maximal unitigs, and 
29\% of dbSNP locations have more neighbors in omnitigs than in unitigs.

Our results are naturally limited to the context of our model and its assumptions.
Intuitively, we assume that (i) the sequenced genome is circular, (ii) there are no gaps in coverage, and 
(iii) there are no errors in the reads.
A mathematically precise definition of our model will be presented in \Cref{sec:formulation}.
We argue that such a model is necessary if we want to 
prove even the simplest results about unitigs (\Cref{sec:formulation}).
Similar to previous studies, we also do not deal with multiple chromosomes or the double-strandedness of DNA
and assume the genome is represented by a covering walk. 
As with previous papers that developed better theoretical underpinnings~\cite{P89,IW95,M05,pdbg},
it is necessary to prove results in a somewhat idealized setting.
While this paper falls short of analyzing real data, 
we believe that omnitigs can be incorporated into practical genome analysis and assembly tools --
similar to the way that error-free studies of de Bruijn~\cite{P89} and paired de Bruijn graphs~\cite{pdbg}
became the basis of practical assemblers~\cite{PTW01,vyahhi12,spades}.






\section{Related work}\label{sec:related-work}
The number of related assembly papers is vast, and we refer the reader to some surveys~\cite{miller_assembly_2010,nagarajan2013sequence}.
For an empirical evaluation of the correctness of several state-of-the-art assemblers, see~\cite{GAGE}.
Here, we discuss work on the theoretical underpinnings of assembly.

There are many formulations of the genome assembly problem. One of the first asks to reconstruct the genome as a shortest superstring of the reads~\cite{peltola83,K92,KM95}. Later formulations referred to a graph built from the reads, such as a de Bruijn graph~\cite{IW95,PTW01} or a string graph~\cite{M05,SGA}. 
In an (edge-centric) de Bruijn graph, the reconstructed genome can be modeled as a circular walk covering every edge exactly once---Eulerian~\cite{PTW01}---or at least once---a Chinese Postman tour~\cite{MB09,MGMB07,NP09,kapun13b}. In 
a string graph, the reconstructed genome can be modeled as a circular walk covering every node exactly once---Hamiltonian---\cite{SequencingHybridizationLysov1988,narzisi14}, or at least once~\cite{NP09}. These models have also been considered in their weighted versions~\cite{MB09,NP09,narzisi14}, or augmented to include other information, such as mate-pairs~\cite{journals/jcb/RubinovG95,pdbg,kapun13a}. Each such notion of genomic reconstruction brought along questions concerning its validity. For example, under which conditions on the sequencing data (e.g., coverage, read length, error rate) is there at least one reconstruction~\cite{LW88,DBLP:journals/tit/MotahariBT13}, or exactly one reconstruction~\cite{BBT13,DBLP:journals/bmcbi/LamKT14,PTW01}. If there are many possible reconstructions, then what is their number~\cite{kingsford10,journals/bioinformatics/Guenoche92} and in which aspects one is different from all others~\cite{journals/bioinformatics/Guenoche92}. 
In contrast to the framework of this paper, most of these formulations deal with finding a single genomic reconstruction as opposed to a set of safe strings (i.e. contigs). 

There are a few notable exceptions.
In~\cite{ray}, Boisvert and colleagues also define the assembly problem in terms of finding contigs,
rather than a single reconstruction.
Nagarajan and Pop~\cite{NP09} observe that Waterman's characterization \cite{W95} of the graphs with a unique Eulerian tour leads to a simple algorithm for finding all safe strings when a genomic reconstruction is an Eulerian tour.
They also suggest an approach for finding all the safe strings when a genomic reconstruction is a Chinese Postman tour~\cite{NP09}.
We note, however, that in the Eulerian model, the exact copy count of each edge should be known in advance, while in the Chinese Postman model (minimizing the length of the genomic reconstruction), the solution will over-collapse all tandem repeats.
Furthermore, these approaches have not been implemented and hence their effectiveness is unknown.

In practice, the most commonly employed safe strings are the ones spelled by maximal \emph{unitigs}, where {\em unitigs} are paths whose internal nodes have in- and out-degree one.
\Cref{fig:all-algorithms} shows an example of the output of the unitig algorithm, and 
also illustrates that it does not identify all safe strings.
The EULER assembler~\cite{PTW01} takes unitigs a step further and identifies strings spelled by paths whose internal nodes have out-degree equal to one (with no constraint on their in-degree).  It can be shown that such strings are also safe.
However, the most complete characterization of safe strings that we found is given by the 
{\em \sj algorithm}
\cite{MGMB07,jacksonthesis,kingsford10}.
Consider a node $v$ with exactly one in-neighbor $u$ and more than one out-neighbors $w_1, \ldots, w_d$.
The {\em Y-to-V reduction} applied to $v$ removes $v$ and its incident edges from the graph and adds nodes $v_1, \ldots, v_d$ with edges from $u$ to $v_i$ and from $v_i$ to $w_i$, for all $1\leq i \leq d$.
The Y-to-V reduction is defined symmetrically for nodes with out-degree exactly one and in-degree greater than one.
\Cref{fig:Y-to-V} illustrates the definition.
The \sj algorithm proceeds by repeatedly applying Y-to-V reductions, in arbitrary order, for as long as possible. 
The algorithm then outputs the strings spelled by the maximal unitigs in the final graph (see \Cref{fig:all-algorithms}d for an example).
The Y-to-V algorithm can also be shown to be safe, but, as we will show in \Cref{fig:all-algorithms}, it does not always output {\em all} the safe strings.
We are not aware of any study that compares the merits of Y-to-V contigs to unitigs, and we therefore perform
this analysis in~\Cref{sec:experiments}.

\begin{figure}[t]
\centering
\subfigure[\label{fig:Y-to-V-out}]{
\scalebox{0.9}{
\begin{tikzpicture}[scale=1.2,->]
\node[label={above:$u$}] (u) at (0,0) {};
\node[label={above:$v$}] (v) at (1,0) {};
\node[label={above:$w_1$}] (w1) at (2,0.8) {};
\node[label={above:$w_2$}] (w2) at (2,0) {};
\node[label={below:$w_d$}] (wd) at (2,-0.8) {};
\node[draw=none,fill=none] (0) at (2,-0.32) {$\vdots$};
\draw (u) to (v);
\draw (v) to (w1);
\draw (v) to (w2);
\draw (v) to (wd);
\node[draw=none] (arrow) at (3,0) {$\Longrightarrow$};
\begin{scope}[shift={(4,0)}]
\node[label={above:$u$}] (u) at (0,0) {};
\node[label={above:$v_1$}] (v1) at (1,0.8) {};
\node[label={above:$v_2$}] (v2) at (1,0) {};
\node[label={below:$v_d$}] (vd) at (1,-0.8) {};
\node[draw=none,fill=none] (0) at (1,-0.32) {$\vdots$};
\node[label={above:$w_1$}] (w1) at (2,0.8) {};
\node[label={above:$w_2$}] (w2) at (2,0) {};
\node[label={below:$w_d$}] (wd) at (2,-0.8) {};
\node[draw=none,fill=none] (1) at (2,-0.32) {$\vdots$};
\draw (u) to (v1);
\draw (u) to (v2);
\draw (u) to (vd);
\draw (v1) to (w1);
\draw (v2) to (w2);
\draw (vd) to (wd);
\end{scope}
\end{tikzpicture}}
}
\hspace{0.3cm}
\subfigure[\label{fig:Y-to-V-in}]{
\scalebox{0.9}{
\begin{tikzpicture}[scale=1.2,->]
\node[label={above:$u$}] (u) at (2,0) {};
\node[label={above:$v$}] (v) at (1,0) {};
\node[label={above:$w_1$}] (w1) at (0,0.8) {};
\node[label={above:$w_2$}] (w2) at (0,0) {};
\node[label={below:$w_d$}] (wd) at (0,-0.8) {};
\node[draw=none,fill=none] (0) at (0,-0.32) {$\vdots$};
\draw (v) to (u);
\draw (w1) to (v);
\draw (w2) to (v);
\draw (wd) to (v);
\node[draw=none] (arrow) at (3,0) {$\Longrightarrow$};
\begin{scope}[shift={(4,0)}]
\node[label={above:$u$}] (u) at (2,0) {};
\node[label={above:$v_1$}] (v1) at (1,0.8) {};
\node[label={above:$v_2$}] (v2) at (1,0) {};
\node[label={below:$v_d$}] (vd) at (1,-0.8) {};
\node[draw=none,fill=none] (0) at (1,-0.32) {$\vdots$};
\node[label={above:$w_1$}] (w1) at (0,0.8) {};
\node[label={above:$w_2$}] (w2) at (0,0) {};
\node[label={below:$w_d$}] (wd) at (0,-0.8) {};
\node[draw=none,fill=none] (1) at (0,-0.32) {$\vdots$};
\draw (v1) to (u);
\draw (v2) to (u);
\draw (vd) to (u);
\draw (w1) to (v1);
\draw (w2) to (v2);
\draw (wd) to (vd);
\end{scope}
\end{tikzpicture}}
}
\caption{The Y-to-V reduction applied to node $v$. In Fig.~\ref{fig:Y-to-V-in} $v$ has in-degree exactly one; in Fig.~\ref{fig:Y-to-V-in} $v$ has out-degree exactly one.\label{fig:Y-to-V}}
\end{figure}
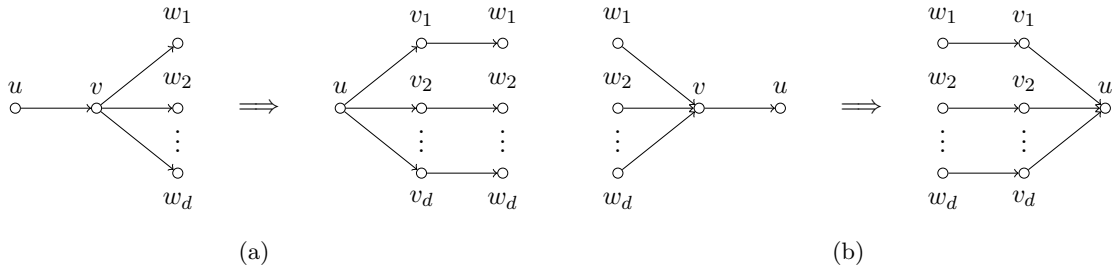


\section{Basic definitions}
Given a string $x$ and an index $1 \leq i \leq |x|$, we define $\pre(x,i)$ and $\suf(x,i)$ as its length $i$ prefix and suffix, respectively.
If $x$ and $y$ are two strings, and $\suf(x,k) = \pre(y,k)$
for some $k \leq |x| - 1$, 
then we define $x \oplus^k y$ as $x[1..|x| - k]$ concatenated with $y$.
This captures the notion of merging two overlapping strings.
A \kmer of $x$ is a substring of length $k$.
Let $R$ be a set of strings, which we equivalently refer to as {\em reads}.
The \emph{node-centric de Bruijn graph built on $R$}, denoted $\dbnc^k(R)$, is the graph whose set of nodes is the 
set of all \kmers of $R$, 
in which there is an edge from a node $x$ to a node $y$ iff $\suf(x,k-1) = \pre(y,k-1)$~\cite{minia}. 
The \emph{edge-centric de Bruijn graph built on $R$}, denoted $\dbec^k(R)$ is defined similarly to $\dbnc^k(R)$, with the difference that there is an edge from $x$ to $y$ iff $\suf(x,k-1) = \pre(y,k-1)$ \emph{and} $x \oplus^{k-1} y$ is a substring of some string in $R$~\cite{IW95}.
The \emph{weight} of the edges of $\dbnc^k(R)$ and $\dbec^k(R)$ is $k-1$.

Let $G$ be a graph, possibly with parallel edges and self-loops.
The number of nodes and edges in a graph are denoted by $n$ and $m$, respectively.
We use $N^-(v)$ to denote the set of in-neighbors and $N^+(v)$ to denote the set of out-neighbors of a node $v$.
A \emph{walk} $w$ is a sequence $(v_0,e_0,v_1,e_1,\dots,v_t,e_t,v_{t+1})$ where $v_0,\dots,v_{t+1}$ are nodes, and each $e_i$ is an edge from $v_i$ to $v_{i+1}$, and $t \geq -1$. 
Its {\em length} is its number of edges, namely $t+1$. 
A \emph{path} is a walk where the nodes are all distinct, 
except possibly the first and last nodes may be the same, in which case it will also be called a \emph{cycle}.
Walks and paths of length at least one are called \emph{proper}.
A walk whose first and last nodes coincide is called \emph{circular} walk.
A path (walk) with first node $u$ and last node $v$ will be called a \emph{path (walk) from $u$ to $v$}, and denoted as \emph{$u$-$v$ path (walk)}. 
A walk is called \emph{node-covering} if it passes through each node of $G$, and \emph{edge-covering} if it passes through each edge of $G$. 
The notions of \emph{prefix} and \emph{subwalk} are defined for walks in the natural way, e.g., by interpreting a walk to be a string made up by concatenating its edges. In particular, we say that a walk $w_1$ is a subwalk of a \emph{circular} walk $w_2$ if $w_1$ interpreted as string is a substring of $w_2$ interpreted as \emph{circular} string. 
In this paper we allow strings and walks to have overlapping extremities when viewed as substrings of a circular string, i.e., when aligned to a circular string (see e.g. the two omnitigs from Figure~\ref{fig:all-algorithms}(f) which have an overlapping tail and head).

Let $\ell$ be a function labeling the nodes  of $G$ and let $c$ be a function giving weights to the edges 
(intuitively, $c$ should represent the length of overlaps).
One can apply the notion of string spelled by a walk $w = (v_0,e_0,v_1,e_1,\dots,v_t,e_t,v_{t+1})$ by defining 
the string \emph{spelled by $w$} as
$\spell(w) = \ell(v_0) \oplus^{c(e_0)} \ell(v_1) \oplus^{c(e_1)} \cdots \oplus^{c(e_t)} \ell(v_{t+1})$. 
When the walk $w$ is circular (thus $v_{t+1} = v_0$), then $\spell(w)$ will be interpreted as the circular string obtained by overlapping the strings $\ell(v_0)$ and $\ell(v_{t+1})$.



\section{Problem formulation}\label{sec:formulation}
There are various theoretical approaches to formulating the assembly problem.
Here, we adopt a model that captures the most popular ones: 
the node-centric de Bruijn graph, 
the edge-centric de Bruijn graph,
and the string graph~\cite{M05}.
We generalize these using the notion of a {\em genome graph}:

\begin{definition}[Genome graph]
\label{def:genome-graph}
A graph $G$ with edge-weights given by $c$ and node-labels is a {\em genome graph}
if and only if 
(1) for every edge $e=(u,v)$, $\suf(u,c(e)) = \pre(v,c(e))$, and 
(2) for any two walks $w_1$ and $w_2$, $w_1$ is a subwalk of $w_2$ if and only if 
$\spell(w_1)$ is a substring of $\spell(w_2)$.
\end{definition}
Both node- and edge-centric de Bruijn graphs are genome graphs, directly by their definition.
Similarly, the interested reader can verify that string graphs, as commonly defined in~\cite{M05,NP09,MGMB07,simpson10}, are genome graphs.
Intuitively, the first condition states that the edge-weights represent the length of overlaps between strings, while the second condition prohibits a certain redundancy in the graph.
It can be broken if, for example, there are nodes with duplicate labels, or if some labels are substrings of others.
Or, for strings graphs, it can be broken if transitive edges are not removed from the graph~\cite{M05}. 
We now augment a genome graph with a rule defining a ``genomic reconstruction.'' 




\begin{definition}[Graph model]
	A \emph{graph model} ${\mathcal G}$ is defined by
	\begin{itemize}
		\item An algorithm that transforms a set of reads $R$ into a genome graph, denoted by $\model(R)$.
		\item A rule that determines whether a walk in $\model(R)$ is a {\em genomic reconstruction}.
	\end{itemize}
\end{definition}
Intuitively, a genomic reconstruction spells a genome that could have generated the observed set of reads $R$.
In this paper, we consider two graph models. In the \emph{edge-centric} model, a genomic reconstruction is a circular edge-covering walk; its underlying genome graph can be e.g.~an edge-centric de Bruijn graph. In the \emph{node-centric} model, a genomic reconstruction is a circular node-covering walk; its underlying genome graph can be a node-centric de Bruijn graph or a string graph. As mentioned in the introduction, we assume, without always explicitly stating it onwards, that $\model(R)$ contains at least one genomic reconstruction, and for technical reasons---see the proof of Lemma~\ref{thm:number-of-edge-centric-contigs}---that $\model(R)$ is always different from a single cycle. 
In fact, in this latter case the assembly problem is trivial.

We now define the strings that belong to all genomic reconstructions.

\begin{definition}[Safe string]
\label{def:safe-string}
	Given a set of reads $R$ and a graph model $\model$,
	a string $s$ is said to be a {\em safe string for $\model(R)$} if 
	for every genomic reconstruction $w$ of $\model(R)$, $s$ is a substring of $\spell(w)$.
\end{definition}
In particular, for a node-centric (respectively, edge-centric) graph model $\model$,
a string $s$ is safe if for every circular node-covering (respectively, edge-covering) walk $w$,
$s$ is a substring of $\spell(w)$. 
It also follows from the definitions (again assuming no gaps in coverage and no errors in the reads) that if the genome graph is $\dbnc^k(R)$ or $\dbec^k(R)$, then a string is safe if it is a substring of every circular string with the same set of $k$-mers, or $(k+1)$-mers respectively, as $R$.

Solving the following problem gives all the information that can be safely retrieved from a graph model. 

\begin{definition}[The safe and complete contig assembly problem]
Given a set of reads $R$ and a graph model $\model$, output \emph{all} the safe strings for $\model(R)$.
\end{definition}
In this paper we solve this problem for the node- and edge-centric models defined above. In Sections~\ref{sec:omnitig-char} and~\ref{sec:omnitig-alg} we first deal with the edge-centric model, and then in \Cref{sec:node-centric-model} we show how these results can be modified for the node-centric model.

As a technical aside, our algorithms will output only \emph{maximal} safe strings, in the sense that they are not a substring of any other safe string. 
In fact, this is desirable in practice, and moreover, the set of all safe strings is the set of all substrings of the maximal ones. 

\paragraph{A note on assumptions:}
Our model makes three implicit assumptions, as outlined at the end of the Introduction.
Here, we observe that such assumptions are necessary to prove even the simplest desired property: that the unitig algorithm outputs only safe strings.
Let $w = (v_0,e_0,v_1,e_1,v_2)$ be a unitig in an edge-centric de Bruijn graph $G$ built from the \kpomers of a genome $S$. 
If the genome is not circular (assumption (i)), then e.g.~the last $k$-mer of $S$ can be $v_0$, its first $k$-mer can be $v_1$, the string $v_0 \oplus^k v_1$ can appear inside $S$, but $v_0 \oplus^k v_1 \oplus^k v_2$ does not have to appear in $S$. 
If there are gaps in coverage (assumption (ii)), then both an in-neighbor $v'$ and an out-neighbor $v''$ of $v_1$ may be missing from $G$ making $w$ look safe whereas in reality $v_0 \oplus^k v_1 \oplus^k v_2$ may not be a substring of $S$. 
If a read contains a sequencing error (assumption (iii)), then this creates a bubble in $G$ with one of its paths being a unitig not spelling a substring of $S$.

\begin{figure}[t]
\centering

\includegraphics[scale=0.9]{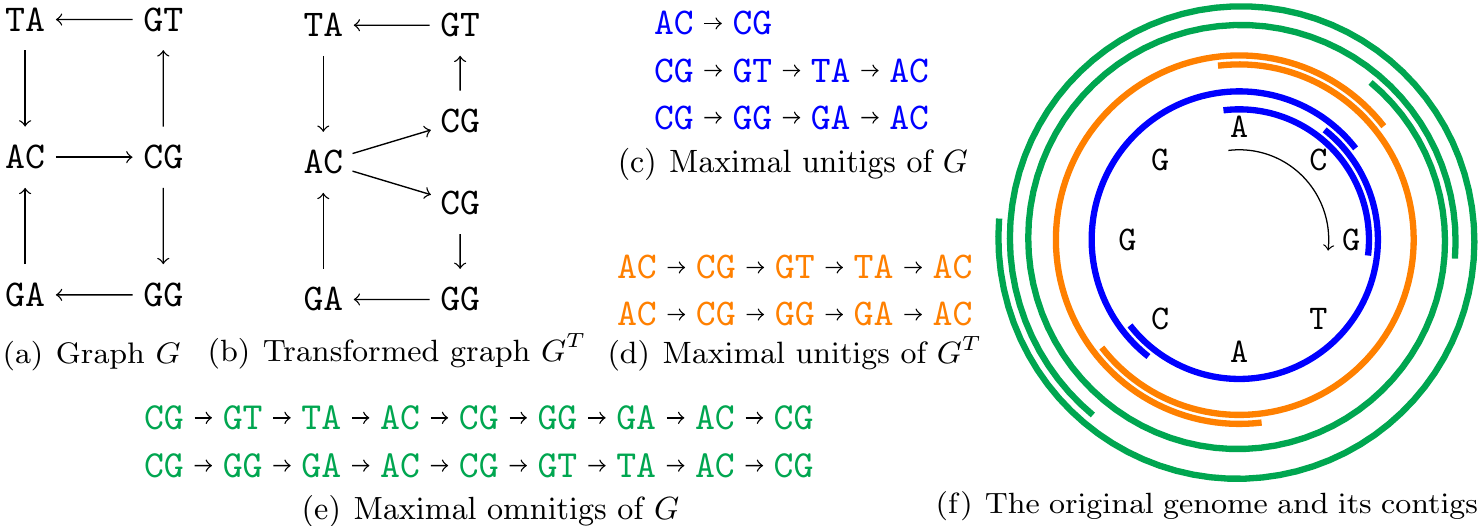}

\caption{The output of the three algorithms on the edge-centric de Bruijn graph $G$ from (a), 
built from the circular string in (f). 
Each contig is drawn as an arc on the wheel in (f). 
(c) the maximal unitigs of $G$; 
(b) the Y-to-V reduction is applied to node \texttt{CG} and the resulting graph $G^T$ is shown; no more reductions are applicable and $G^T$ has two maximal unitigs, shown in (d); 
(e) the maximal omnitigs of $G$; 
in this particular example, they are also circular edge-covering walks of $G$, and one can be obtained from the other by a circular permutation.
Note that this example illustrates that the Y-to-V algorithm does not always output all safe strings, because its output (d) does not contain the strings of (e).
\label{fig:all-algorithms}}
\end{figure}

\section{Characterization of safe strings: \ecs\label{sec:omnitig-char}}

In this section, we provide a characterization of walks that spell safe strings (see Figure~\ref{fig:omnitig-edge-centric} for an illustration). This characterization will be the basis of our omnitig algorithm in the next section.

\sloppypar
\begin{definition}[\Ec, edge-centric model]
Let $G$ be a directed graph and let $w = (v_0,e_0,v_1,e_1,\dots,v_t,e_t,v_{t+1})$ be a walk in $G$. We say that $w$ is a {\em \ec} if and only if for all $1 \leq i \leq j \leq t$, there is no proper $v_j$-$v_i$ path with first edge different from $e_j$, and last edge different from $e_{i-1}$.
\label{def:ec-omnitigs}
\end{definition}
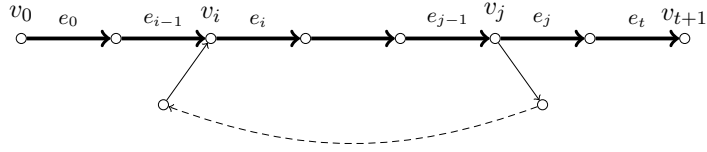
\begin{figure}
\begin{center}
\scalebox{0.9}{
\begin{tikzpicture}[scale=1.4,->]
\draw[clip,draw=none] (-.5,-1.2) rectangle (7.5,0.5);
\node[label={above:$v_0$}] (0) at (0,0) {};
\node (1) at (1,0) {};
\node[label={above:$v_i$}] (2) at (2,0) {};
\node (3) at (3,0) {};
\node (4) at (4,0) {};
\node[label={above:$v_j$}] (5) at (5,0) {};
\node (6) at (6,0) {};
\node[label={[yshift=-6pt]above:$v_{t+1}$}] (7) at (7,0) {};
\draw[ultra thick] (0) to node[draw=none,fill=none,above] {\small $e_0$} (1);
\draw[ultra thick] (1) to node[draw=none,fill=none,above,yshift=-.15cm] {\small $e_{i-1}$} (2);
\draw[ultra thick] (2) to node[draw=none,fill=none,above] {\small $e_i$} (3);
\draw[ultra thick] (3) to 
(4);
\draw[ultra thick] (4) to node[draw=none,fill=none,above,yshift=-.15cm] {\small $e_{j-1}$} (5);
\draw[ultra thick] (5) to node[draw=none,fill=none,above] {\small $e_j$} (6);
\draw[ultra thick] (6) to node[draw=none,fill=none,above] {\small $e_t$} (7);
\node (2') at (1.5,-0.7) {};\draw (2') to (2);
\node (5') at (5.5,-0.7) {};\draw (5) to (5');
\draw[densely dashed,bend left=20] (5') to (2');
\end{tikzpicture}
}
\end{center}
\caption{An illustration of the omnitig definition, edge-centric model\label{fig:omnitig-edge-centric}}
\end{figure}

The following theorem proves that the omnitigs spell all the safe strings, 
using the help of an intermediary characterization of omnitigs. 

\begin{restatable}{theorem}{thmedgecentriccontig}
Given an edge-centric graph model $G = \model(R)$ built for a set of reads $R$, and a string $s$, the following three statements are equivalent:
\begin{enumerate}
\item[\rm (1)] $s$ is a safe string for $G$;
\item[\rm (2)] $s$ is spelled by a walk $w = (v_0,e_0,v_1,e_1,\dots,v_t,e_t,v_{t+1})$ in $G$ and $w$ is an \ec;
\item[\rm (3)] $s$ is spelled by a walk $w = (v_0,e_0,v_1,e_1,\dots,v_t,e_t,v_{t+1})$ in $G$ and for all $1 \leq j \leq t$ all proper $v_j$-$v_j$ (circular)
	walks $w'$ fulfill at least one of the following conditions:
\begin{itemize}
\item[\rm (i)] the subwalk $(v_j,e_j,\dots,v_t,e_t,v_{t+1})$ of $w$ is a prefix $w'$, or
\item[\rm (ii)] the subwalk $(v_0,e_0,\dots,v_{j-1},e_{j-1},v_{j})$ of $w$ is a suffix of $w'$, or
\item[\rm (iii)] $w$ is a subwalk of $w'$.
\end{itemize}
\end{enumerate}
\label{thm:edge-centric-contig2}
\end{restatable}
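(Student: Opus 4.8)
The plan is to prove the cycle of implications $(2)\Rightarrow(3)\Rightarrow(1)\Rightarrow(2)$, with the bulk of the work concentrated in $(1)\Rightarrow(2)$, i.e.\ showing that any walk spelling a safe string is an omnitig. First I would fix the easy translation step: by the definition of a genome graph, a string $s$ is a substring of $\spell(w)$ if and only if $s$ is spelled by a subwalk of $w$, and every safe string is (a substring of) a maximal safe string; so throughout I work with walks rather than strings, and I may assume $s$ is exactly $\spell(w)$ for the walk $w$ in question. The equivalence $(2)\Leftrightarrow(3)$ is essentially unwinding definitions: a proper $v_j$-$v_i$ path with first edge $\ne e_j$ and last edge $\ne e_{i-1}$ exists (violating the omnitig condition) precisely when there is a proper circular $v_j$-$v_j$ walk $w'$ that fails all three of (i), (ii), (iii) — condition (iii) failing means $w$ is not a subwalk, (i) failing means the suffix of $w$ starting at $v_j$ is not a prefix of $w'$ (so the first edge of $w'$ differs from $e_j$, or $w'$ is ``too short''), and (ii) failing means the prefix of $w$ ending at $v_j$ is not a suffix of $w'$ (so the last edge differs from $e_{i-1}$). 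I would be careful here about the degenerate cases $i=1$ (where $e_{i-1}=e_0$ does not constrain anything, since there is no edge before $e_0$) and $j=t$, and about the role of the indices: a single offending path from $v_j$ back to $v_i$ can be closed up into a circular walk at $v_j$ using the subwalk of $w$ from $v_i$ to $v_j$, giving the needed $w'$, and conversely any bad $w'$ yields a bad subpath by taking $v_i,v_j$ to be appropriate nodes along $w'$.

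For $(3)\Rightarrow(1)$ — omnitig (in the form (3)) implies safe — I would argue by contradiction: suppose $w$ is not safe, so some circular edge-covering walk $W$ of $G$ does not contain $\spell(w)$ as a substring, equivalently (genome-graph property) $w$ is not a subwalk of the circular walk $W$. Since $W$ is edge-covering it uses $e_0$; walk along $W$ starting just after an occurrence of $e_0$ and match it against $w$ edge by edge for as long as possible. Because $w$ is not a subwalk of $W$, the match breaks at some first index $j$: $W$ follows $w$ through $v_j$ but then leaves along an edge $\ne e_j$ (this is where I use that $v_j$ has out-degree $\ge 2$, otherwise $W$ would be forced to continue along $e_j$). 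Now follow $W$ until it first returns to $v_j$ — it must, since $W$ is a finite closed walk through $v_j$ — obtaining a proper circular $v_j$-$v_j$ walk $w'$ whose first edge is $\ne e_j$, so (i) fails, and which is a sub-walk of $W$. The remaining content is to show $w'$ also fails (ii) and (iii), contradicting (3). For (iii): if $w$ were a subwalk of $w'$ it would be a subwalk of $W$, contradiction. For (ii): one has to rule out that the portion of $w$ before $v_j$ appears as a suffix of $w'$; here I would use minimality of $j$ together with the edge-covering property of $W$ and the fact that $W$ contains $e_0$, arguing that if the prefix $v_0\cdots v_j$ of $w$ were a suffix of $w'$ then actually a longer initial segment of $w$ embeds into $W$, contradicting the choice of $j$ (this is the delicate bookkeeping step, and I expect it to need the assumption that $\model(R)$ is not a single cycle, to guarantee that $v_0$ genuinely has the branching structure that prevents the match from ``wrapping around''). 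This contradiction establishes safety.

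The main obstacle, and the step I would spend the most care on, is the converse $(1)\Rightarrow(2)$: given that $w$ spells a safe string, construct, from any alleged omnitig violation, a circular edge-covering walk $W$ of $G$ that avoids $\spell(w)$, contradicting safety. So suppose there are $1\le i\le j\le t$ and a proper $v_j$-$v_i$ path $P$ with first edge $\ne e_j$ and last edge $\ne e_{i-1}$. The idea is to build a closed edge-covering walk that uses $P$ as a ``detour'': concretely, take the circular walk $C = P$ followed by the subwalk of $w$ from $v_i$ to $v_j$ (this is a genuine closed walk through $v_j$ that does \emph{not} contain $\spell(w)$, because at $v_j$ it continues along $P$'s first edge instead of $e_j$, and at $v_i$ it was entered along $P$'s last edge instead of $e_{i-1}$ — so neither the ``exit at $v_j$'' nor the ``entry at $v_i$'' matches $w$, which is exactly what blocks an occurrence of $w$ straddling that stretch). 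Then I need to \emph{extend} $C$ to cover every edge of $G$ while keeping it closed and keeping $\spell(w)$ out of it; here I would invoke the standing assumption that $G$ already contains at least one circular edge-covering walk $W_0$, and splice $W_0$'s edge-traversals into $C$ at shared nodes — since $G$ is connected (it has an edge-covering closed walk) every edge can be reached, and adding extra closed excursions at a node never creates a new occurrence of $w$ as long as we never reconstruct the specific forbidden transitions $e_{i-1}$ into $v_i$ and out of $v_j$ via $e_j$ consecutively along $w$. Making this splicing argument airtight — ensuring the result is still a single closed walk, still edge-covering, and still $w$-free — is the technical heart of the proof; I would likely phrase it as: the ``bad'' transition is localized, so I only need that \emph{some} position along $w$ fails to be realized in $W$, and the detour through $P$ guarantees that. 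Once $(1)\Rightarrow(2)$ is in place, combined with $(2)\Leftrightarrow(3)$ and $(3)\Rightarrow(1)$ above, the three statements are equivalent.
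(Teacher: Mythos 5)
Your overall architecture matches the paper's --- the same cycle of implications, the same detour built from the offending $v_j$-$v_i$ path $p$ followed by the subwalk of $w$ from $v_i$ to $v_j$, and the same plan of extracting a proper $v_j$-$v_j$ walk from a covering walk and testing it against (i)--(iii). But two of your three steps have genuine gaps. The most serious is that you dismiss $(2)\Rightarrow(3)$ as ``unwinding definitions.'' It is not: its contrapositive requires turning a proper circular $v_j$-$v_j$ \emph{walk} failing (i)--(iii) into a proper $v_\ell$-$v_i$ \emph{path} violating the omnitig condition. Locating the first deviation $e'_\ell\neq e_\ell$ and the last deviation $e'_{i-1}\neq e_{i-1}$ gives a $v_\ell$-$v_i$ walk with the right first and last edges, but nothing forces it to be a path, and deleting its cycles can destroy exactly what you need (a deleted cycle may contain the deviating edge, or the reduced walk may suddenly satisfy (i), (ii) or (iii)). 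The paper spends the bulk of its proof precisely here: it takes a \emph{shortest} violating walk $w'$, removes one cycle $c$, and runs a three-way case analysis showing that either the reduced walk or $c$ itself is a shorter violating walk, contradicting minimality. Your sketch contains no substitute for this argument.

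The second gap is in $(1)\Rightarrow(2)$: you build the small closed walk $C=P$ followed by $(v_i,e_i,\dots,v_j)$ and then try to grow it to an edge-covering walk by splicing in excursions from a known covering walk $W_0$. This cannot work as stated, because under hypothesis (1) the walk $W_0$ \emph{contains $w$ as a subwalk}; splicing $W_0$ (or any sufficiently long contiguous piece of it) into $C$ as a closed excursion reintroduces an occurrence of $w$ living entirely inside the excursion, untouched by your ``localized bad transition.'' The paper goes the opposite way: it starts from the covering walk $A$ and inserts the detour at \emph{every} visit of $v_j$, so edge-covering is automatic, and one only checks that the specific subwalk $(v_{i-1},e_{i-1},v_i,\dots,v_j,e_j,v_{j+1})$ can no longer occur, which reduces to $p$ being a path. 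A smaller but real problem affects your $(3)\Rightarrow(1)$: choosing $w'$ as ``$W$ until it first returns to $v_j$'' does not support the case-(ii) contradiction, since the prefix of $w$ being a suffix of that $w'$ yields a longer matched prefix only if the edge leaving $v_j$ right after $w'$ is $e_j$; the paper therefore defines $j$ via the longest prefix of $w$ that $A$ ever traverses and takes $w'$ between the last two occurrences of $v_j$ before $A$ next traverses $e_j$ (which exists because $A$ is edge-covering). The non-single-cycle assumption you invoke there is not needed for this theorem at all.
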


We prove Theorem~\ref{thm:edge-centric-contig2} by proving the cyclical sequence of implications (1) $\Rightarrow$ (2) $\Rightarrow$ (3) $\Rightarrow$ (1). 

\begin{proof}[Proof of {\rm (1)} $\Rightarrow$ {\rm (2)}]
Assume that $s$ is a safe string for $G$. 
By definition of a genome graph, $s$ is spelled by a unique walk in $G$. 
Let $w = (v_0,e_0,v_1,e_1,\dots,v_t,e_t,v_{t+1})$ be this walk, and 
let $A$ be a circular edge-covering walk of $G$; 
thus $A$ contains $w$ as subwalk, and $s$ is a sub-string of $\spell(A)$.

Assume for a contradiction that there exist $1 \leq i \leq j \leq t$, and a proper $v_j$-$v_i$ path $p$ with first edge different from $e_j$ and last edge different from $e_{i-1}$. 
From $A$, we will construct another circular edge-covering walk $B$ of $G$ which does not contain $w$ as subwalk,
and hence, by the definition of a genome graph, also $\spell(B)$ does not contain $s$ as sub-string.
This will contradict the safeness of $s$.
Whenever $A$ visits node $v_j$, then $B$ follows the $v_j$-$v_i$ path $p$, then follows $(v_i,e_i,\dots,e_{j-1},v_j)$, and finally continues as $A$. To see that $w$ does not appear as a subwalk of $B$, consider the subwalk $w' = (v_{i-1},e_{i-1},v_i,e_i,\dots,e_{j-1},v_j,e_j,v_{j+1})$ of $w$ (recall that $1 \leq i \leq j \leq t$). Since $p$ is proper, and its first edge is different from $e_j$ and its last edge is different from $e_{i-1}$, then, by construction, the only way that $w'$ can appear in $B$ is as a subwalk of $p$. However, this implies that both $v_j$ and $v_i$ appear twice on $p$, contradicting the fact that $p$ is a 
path.
\end{proof}

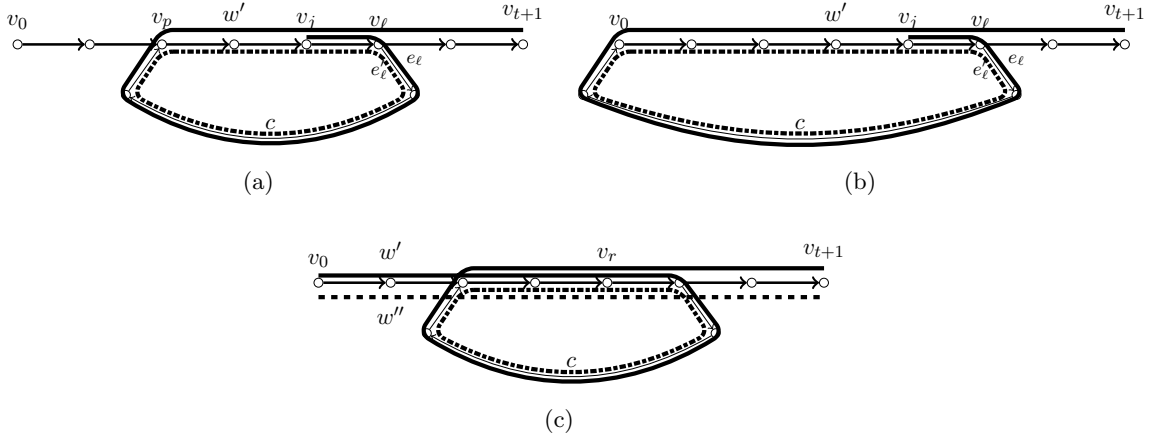
\begin{figure}[h]
\centering
\subfigure[\label{fig:edge-centric-contig-charact-1a}]{\scalebox{0.8}
{\begin{tikzpicture}[scale=1.2,->]
\draw[clip,draw=none] (-.7,-1.5) rectangle (7.5,0.6);
\node[label={above:$v_0$}] (0) at (0,0) {};
\node (1) at (1,0) {};
\node[label={above:$v_p$}] (2) at (2,0) {};
\node (3) at (3,0) {};
\node[label={above:$v_j$}] (4) at (4,0) {};
\node[label={above:$v_\ell$}] (5) at (5,0) {};
\node (6) at (6,0) {};
\node[label={above:$v_{t+1}$}] (7) at (7,0) {};
\draw[very thick] (0) to (1);
\draw[very thick] (1) to (2);
\draw[very thick] (2) to (3);
\draw[very thick] (3) to (4);
\draw[very thick] (4) to (5);
\draw[very thick] (5) to node[draw=none,fill=none,below] {\small $e_\ell$} (6);
\draw[very thick] (6) to (7);
\node (2') at (1.5,-0.7) {};\draw (2') to (2);
\node (5') at (5.5,-0.7) {};\draw (5) to node[draw=none,fill=none,left] {\small $e_\ell'$} (5');
\draw[bend left=30] (5') to (2');

\draw[-,rounded corners=5pt,draw=black,line width=2pt,densely dashdotted]
($ (2) + (0,-0.1) $) -- ($ (5) + (0,-0.1) $) -- ($ (5') + (-0.1,0) $) to[bend left=28] ($ (2') + (0.1,0) $) -- cycle;
\draw[-,rounded corners=5pt,draw=black,line width=2pt,solid]
($ (4) + (0,0.1) $) to ($ (5) + (0,0.1) $) to ($ (5') + (0.1,0) $) to[bend left=32] ($(2') + (-0.1,0)$) to ($ (2) + (0,0.2) $) to ($ (7) + (0,0.2) $);
\node[draw=none,fill=none] at (3.5,-1.1) {$c$};
\node[draw=none,fill=none] at (3,.45) {$w'$};
\end{tikzpicture}}}
\medskip
\subfigure[\label{fig:edge-centric-contig-charact-1b}]{\scalebox{0.8}
{\begin{tikzpicture}[scale=1.2,->]
\draw[clip,draw=none] (-.7,-1.5) rectangle (7.5,0.6);
\node[label={above:$v_0$}] (0) at (0,0) {};
\node (1) at (1,0) {};
\node (2) at (2,0) {};
\node (3) at (3,0) {};
\node[label={above:$v_j$}] (4) at (4,0) {};
\node[label={above:$v_\ell$}] (5) at (5,0) {};
\node (6) at (6,0) {};
\node[label={above:$v_{t+1}$}] (7) at (7,0) {};
\draw[very thick] (0) to (1);
\draw[very thick] (1) to (2);
\draw[very thick] (2) to (3);
\draw[very thick] (3) to (4);
\draw[very thick] (4) to (5);
\draw[very thick] (5) to node[draw=none,fill=none,below] {\small $e_\ell$} (6);
\draw[very thick] (6) to (7);
\node (0') at (-0.5,-0.7) {};\draw (0') to (0);
\node (5') at (5.5,-0.7) {};\draw (5) to node[draw=none,fill=none,left] {\small $e_\ell'$} (5');
\draw[bend left=20] (5') to (0');

\node[draw=none,fill=none] at (2.5,-1.1) {$c$};
\node[draw=none,fill=none] at (3,.45) {$w'$};

\draw[-,rounded corners=5pt,draw=black,line width=2pt,densely dashdotted]
($ (0) + (0,-0.1) $) -- ($ (5) + (0,-0.1) $) -- ($ (5') + (-0.1,0) $) to[bend left=18] ($ (0') + (0.1,0) $) -- cycle;
\draw[-,rounded corners=5pt,draw=black,line width=2pt,solid]
($ (4) + (0,0.1) $) to ($ (5) + (0,0.1) $) to ($ (5') + (0.1,0) $) to[bend left=22] ($(0') + (-0.1,0)$) to ($ (0) + (0,0.2) $) to ($ (7) + (0,0.2) $);

\end{tikzpicture}}}
\subfigure[\label{fig:edge-centric-contig-charact-1c}]{\scalebox{0.8}
{\begin{tikzpicture}[scale=1.2,->]
\draw[clip,draw=none] (-.7,-1.5) rectangle (7.5,0.6);
\node[label={above:$v_0$}] (0) at (0,0) {};
\node (1) at (1,0) {};
\node (2) at (2,0) {};
\node (3) at (3,0) {};
\node[label={[yshift=3pt]above:$v_r$}] (4) at (4,0) {};
\node (5) at (5,0) {};
\node (6) at (6,0) {};
\node[label={above:$v_{t+1}$}] (7) at (7,0) {};
\draw[very thick] (0) to (1);
\draw[very thick] (1) to (2);
\draw[very thick] (2) to (3);
\draw[very thick] (3) to (4);
\draw[very thick] (4) to (5);
\draw[very thick] (5) to (6);
\draw[very thick] (6) to (7);
\node (2') at (1.5,-0.7) {};\draw (2') to (2);
\node (5') at (5.5,-0.7) {};\draw (5) to (5');
\draw[bend left=30] (5') to (2');
\draw[-,rounded corners=5pt,draw=black,line width=2pt,densely dashdotted]
($ (2) + (0,-0.1) $) -- ($ (5) + (0,-0.1) $) -- ($ (5') + (-0.1,0) $) to[bend left=28] ($ (2') + (0.1,0) $) -- cycle;
\draw[-,rounded corners=5pt,draw=black,line width=2pt,solid]
($ (0) + (0,0.1) $) to ($ (5) + (0,0.1) $) to ($ (5') + (0.1,0) $) to[bend left=32] ($(2') + (-0.1,0)$) to ($ (2) + (0,0.2) $) to ($ (7) + (0,0.2) $);
\draw[-,rounded corners=5pt,draw=black,line width=2pt,dashed]
($ (0) + (0,-0.2) $) to ($ (7) + (0,-0.2) $);
\node[draw=none,fill=none] at (3.5,-1.1) {$c$};
\node[draw=none,fill=none] at (1,.45) {$w'$};
\node[draw=none,fill=none] at (1,-.45) {$w''$};
\end{tikzpicture}}}
\caption{Illustration of three cases in the proof of the implication (2) $\Rightarrow$ (3) of \Cref{thm:edge-centric-contig2}.\label{fig:edge-centric-contig-charact-1}}
\end{figure}
\begin{proof}[Proof of {\rm (2)} $\Rightarrow$ {\rm (3)}]
Suppose that $w$ is an \ec, and assume for a contradiction that there exists a proper $v_j$-$v_j$ walk (for some $1 \leq j \leq t$) not satisfying (i)--(iii). 
Let $w'$ be the shortest such walk.
Since $w'$ does not have $(v_j,e_j,\dots,v_t,e_t,v_{t+1})$ as prefix, then there exists a first node $v_\ell$ on $w$, $j \leq \ell \leq t$, such that from $v_\ell$, $w'$ continues with an edge $e_{\ell}' \neq e_{\ell}$. Symmetrically, since $w'$ does not have $(v_0,e_0,\dots,v_{j-1},e_{j-1},v_{j})$ as suffix, let $v_i$ be the last node of $w$, $1 \leq i \leq j$ such that before entering $v_i$, the walk $w'$ uses an edge $e_{i-1}' \neq e_{i-1}$.
Let $w'_0$ denote the subwalk of $w'$ between $e'_{\ell}$ and $e'_{i-1}$ (inclusive). 
If $w'_0$ is a path, then $w''$ is a proper $v_{\ell}$-$v_i$ path, $1 \leq i \leq \ell \leq t$, whose first edge $e_{\ell}'$ is different from $e_\ell$, and its last edge $e_{i-1}'$ is different from $e_{i-1}$, which contradicts the fact that $w$ is an \ec.
We now prove that $w'$ is in fact a path.

	Suppose for a contradiction that it is not, thus that it contains a cycle $c$, with $c \neq w'$. Let $w''$ be the 
walk obtained from $w'$ by removing the cycle $c$. Observe that $w''$ is still a proper $v_j$-$v_j$ walk. We show that $w''$ still does not satisfy (i)--(iii), which will contradict the minimality of $w'$. Assume for a contradiction that $w''$ satisfies at least one of (i), (ii), or (iii). 

First, if $w''$ satisfies (i), this implies that the edge $e_\ell'$ out-going from $v_{\ell}$ belongs to $c$, and after traversing $c$, the walk $w'$ continues through $(v_\ell,e_\ell,\dots,v_t,e_t,v_{t+1})$ (see Figures~\ref{fig:edge-centric-contig-charact-1a} and~\ref{fig:edge-centric-contig-charact-1b}). 
Let $v_p$ be the node of $w$ with greatest index $p \in \{0,\dots,\ell\}$ that $c$ visits with an edge $e' = (v,v_p)$ not on $w$. Such a node exists because $c$ is a cycle and it must return to $v_\ell$. If $p \geq 1$ (see \Cref{fig:edge-centric-contig-charact-1a}), then $c$ does not satisfy (i)--(iii). Since $c$ is proper and passes through $v_\ell$, where $1 \leq \ell \leq t$, this contradicts the minimality of $w'$. Therefore, $p = 0$ (see \Cref{fig:edge-centric-contig-charact-1b}), and thus, the initial $v_j$-$v_j$ walk $w'$ (containing $c$ as subwalk) visits $(v_0,e_0,\dots,v_\ell)$, and then continues through $(v_\ell,e_\ell,\dots,v_t,e_t,v_{t+1})$. This implies that $w'$ contains $w$ as subwalk, which contradicts the choice of $w'$.

The second case when $w''$ satisfies (ii) is entirely symmetric.

Third, assume that $w''$ contains $w$ as subwalk. Since $w$ is not a subwalk of $w'$, this implies that $c$ is a proper $v_r$-$v_r$ walk, for some $1 \leq r \leq t$, not satisfying (i)--(iii), which again contradicts the minimality of $w'$ (see \Cref{fig:edge-centric-contig-charact-1c}).
This completes the proof of {\rm (2)} $\Rightarrow$ {\rm (3)}
\end{proof}

\begin{proof}[Proof of {\rm (3)} $\Rightarrow$ {\rm (1)}]
Assume $w$ satisfies (3), and let $A$ be a circular edge-covering walk of $G$. We need to show that $w$ is a subwalk of $A$. Let $w_j = (v_0,e_0,\dots,v_{j-1},e_{j-1},v_j)$ be the longest prefix of $w$ that $A$ ever traverses, ending at some $v_j$. Since $A$ covers all edges, then it also covers $e_0$, and thus $j \geq 1$. Suppose for a contradiction that $j \neq t+1$. 

Since $A$ is circular and covers all edges of $G$, then after traversing $w_j$, the walk $A$ eventually visits the edge $e_j$. The walk $A$ may visit $v_j$ multiple times before traversing the edge $e_j$. Let $w'$ denote the subwalk of $A$ between the last two occurrences of $v_j$ before $A$ traverses the edge $e_j$. Since $w'$ is a proper $v_j$-$v_j$ walk, $1 \leq j \leq t$, and $w$ satisfies (3), we have that one of the following must hold:
\begin{itemize}
\item the walk $(v_j,e_j,\dots,v_t,e_t,v_{t+1})$ is a prefix of $w'$: this contradicts the fact that $w'$ is a subwalk of $A$ between $v_j$ and the immediately next occurrence of $e_j$ (since in this case $w'$ contains $e_j$);
\item the walk $(v_0,e_0,\dots,v_{j-1},e_{j-1},v_{j})$ is a suffix of $w'$: this implies that $(w',e_j,v_{j+1})$ is a longer prefix of $w$ which is a subwalk of $A$, contradicting the maximality of $w_j$;
\item the walk $w$ appears on $w'$: since $w'$ is a subwalk of $A$, this implies that also $w$ is a subwalk of $A$, contradicting again the maximality of $w_j$.
\end{itemize}

\end{proof}

\section{Omnitig algorithm\label{sec:omnitig-alg}}
\begin{algorithm}[t]
\small
\caption{Omnitig algorithm to find all safe strings of a graph $G$.\label{alg:complete-contigs}}

\SetKwBlock{extend}{${\bf extend}(w)$}{end}

\extend{

Denote $w = (v_0,e_0,v_1,e_1,\dots,v_{t-1},e_{t-1},v_t)$; 

\ForEach{edge $e = (v_t,y)$ out-going from $v_t$}
{
	$X := (N^-(v_1) \setminus \{v_0\}) \cup \cdots \cup (N^-(v_{t}) \setminus \{v_{t-1}\})$\;
	let $G'$ equal $G$ minus the edge $e$\;
	\If{there is no path in $G'$ from $v_t$ to a node of $X$}
	{
		{\bf extend}($(v_0,e_0,v_1,e_1,\dots,v_{t-1},e_{t-1},v_t,e,y)$)\;
	}

}
\If{$w$ was never extended}
	{
			$W := W \cup \{w\}$\;	
	}
}

\medskip
$W := \emptyset$\; 

\ForEach{edge $e = (u,v)$ of $G$}
{
	{\bf extend}($(u,e,v)$)\;
}

remove from $W$ any walk that is a subwalk of another walk in $W$\; 
\Return $\setof{\spell(w)}{w\in W}$\;
\end{algorithm}

In this section, we use \Cref{thm:edge-centric-contig2} to give the omnitig algorithm (\Cref{alg:complete-contigs}) 
and prove that it runs in polynomial time 
(\Cref{thm:nc-ss-algorithm}).
The algorithm finds all maximal omnitigs of $\model(R)$, which, by \Cref{thm:edge-centric-contig2}, are 
exactly the maximal safe strings of $\model(R)$.
Our algorithm is based on the following observation, which follows directly from the definition of omnitigs:
\begin{obs}\label{obs:extend}
	Consider a walk $w' = (v_0,e_0,\dots,e_{t-1},v_t, e_t, v_{t+1})$ of length at least two,
	and consider its subwalk $w=(v_0,e_0,\dots,e_{t-1},v_t)$.
	Then $w'$ is an omnitig if and only if (i) $w$ is an omnitig and 
	(ii) for all $0 \leq i \leq t - 1$, there is no proper $v_t$-$v_i$ path with first edge different from $e_t$
	and last edge different from $e_{i-1}$. 
\end{obs}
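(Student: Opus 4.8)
The plan is to prove the equivalence by unwinding Definition~\ref{def:ec-omnitigs} for $w'$ and matching it clause by clause with (i) and (ii); no separate machinery is needed, which is exactly why the statement is only an observation. By Definition~\ref{def:ec-omnitigs}, $w'=(v_0,e_0,\dots,v_t,e_t,v_{t+1})$ is an omnitig precisely when, for every pair $(i,j)$ with $1\le i\le j\le t$, there is no proper $v_j$-$v_i$ path whose first edge differs from $e_j$ and whose last edge differs from $e_{i-1}$. I would partition this family of pairs into those with $j\le t-1$ and those with $j=t$, and show that the first group is equivalent to (i) and the second to (ii).

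For the pairs with $1\le i\le j\le t-1$, every referenced edge $e_j,e_{i-1}$ lies inside the subwalk $w=(v_0,e_0,\dots,e_{t-1},v_t)$, and under the indexing of Definition~\ref{def:ec-omnitigs} applied to $w$, where $v_t$ occupies the right-endpoint slot, these are exactly the pairs that are quantified over. Hence the conjunction of the corresponding non-existence statements is, verbatim, the statement that $w$ is an omnitig, that is, condition (i).

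For the pairs with $j=t$, we have $v_j=v_t$, the forbidden first edge is the newly appended edge $e_j=e_t$, and the forbidden last edge is $e_{i-1}$; as $i$ ranges over its admissible values these non-existence statements are precisely the clauses of condition (ii) (cf.\ Figure~\ref{fig:omnitig-edge-centric}). With the two identifications in place, both directions are immediate: $w'$ being an omnitig says every pair $(i,j)$ with $1\le i\le j\le t$ is harmless, which is exactly ``(i) and (ii)''; conversely ``(i) and (ii)'' together cover all such pairs, so $w'$ is an omnitig.

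The only delicate point, and essentially the only place an error could creep in, is the index bookkeeping: $w$ has length $t$ while $w'$ has length $t+1$, so the template indices in Definition~\ref{def:ec-omnitigs} must be shifted consistently when moving between the two walks, and one must handle the boundary pairs (where $e_{i-1}$, or $e_j=e_t$, sits at an extremity of $w'$) so that (i) and (ii) between them account for every pair $(i,j)$ exactly once, with none dropped and none double-counted. Beyond this careful accounting there is no combinatorial content.
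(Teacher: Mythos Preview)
Your proposal is correct and matches the paper's approach: the paper gives no proof beyond remarking that the observation ``follows directly from the definition of omnitigs,'' and your partition of the pairs $1\le i\le j\le t$ into those with $j\le t-1$ (yielding~(i)) and those with $j=t$ (yielding~(ii)) is exactly that unwinding. One note on the bookkeeping you flag as delicate: the $j=t$ case produces the range $1\le i\le t$, not the $0\le i\le t-1$ printed in the observation---this appears to be an off-by-one slip in the paper's statement (Line~4 of Algorithm~\ref{alg:complete-contigs}, which implements~(ii), uses in-neighbors of $v_1,\dots,v_t$, consistent with your derivation), not a flaw in your argument.
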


The idea of the algorithm is to start an exhaustive traversal of $G$ from every edge (Lines 11-12), 
which by definition is an omnitig, and to keep traversing edges as long as the current walk is an omnitig. 
An omnitig $w$ is thus recursively constructed,
by possibly extending to the right with each edge $e$ out-going from its last vertex (Lines 3-7). 
If $w$ extended with $e$ is not an omnitig, then we abandon this extension because Observation~\ref{obs:extend} 
tells us that no further extension could be an omnitig. To check if this extension is an omnitig or not, it is enough to check whether condition (ii) of~Observation~\ref{obs:extend} is satisfied. 
Condition (i) is automatically satisfied because of the structure of the algorithm--we extend only walks that are omnitigs.
The omnitigs found are saved in a set $W$ (Line 9), except for those omnitigs that are obviously non-maximal (Line 8).
In the final step (Lines 13-14), we remove the non-maximal omnitigs from $W$ and report the rest.

To check that condition (ii) is satisfied (Lines 4-6),
we take the set $X$ (Line 4) and check 
if there is a path starting with an edge out-going from $v_t$ and different from $e$, and leading to a node of $X$. 
The correctness of this procedure can be seen as follows. 
If there is no such path, then we know that there is no path satisfying 
(ii).
If we do find a path $p$ from $v_t$ to some in-neighbor $x \in X$ of some $v_i$, and $p$ does not use $v_i$, then 
the path obtained by extending $p$ to $v_i$ contradicts 
(ii).
If $p$ contains $v_i$, then such an extension is not possible, because a path cannot repeat a vertex; however, 
we will show that $p$ cannot use $v_i$ by contradiction.
Assume that it does, and observe that after passing through $v_i$, 
the path $p$ cannot pass again through $v_t$. Let $v_j$, $i \leq j < t$, be the first vertex that $p$ visits after $v_i$ such that from $v_j$ it continues with an edge $e' \neq e_j$. Let $p'$ denote the $v_j$-$x$ subpath of $p$ from $v_j$ until $x$. We obtained that $p'$ followed by $v_i$ is a proper $v_j$-$v_i$ path with first edge different from $e_j$, last edge different from $e_{i-1}$, and $1 \leq i \leq j \leq t$.
This contradicts the fact that $w$ (the walk we are extending) is an omnitig.

Next, we show that the algorithm runs in polynomial time.
First, we show that the number of omnitigs included in $W$ and their length, prior to removal of non-maximal ones, is polynomial:
\begin{restatable}{lemma}{thmnumberofedgecentriccontigs}\label{thm:number-of-edge-centric-contigs}
	Let $W$ be a set of omnitigs in an edge-centric graph model $\model(R)$, whose genome graph is different than a single cycle.
	Furthermore, suppose no omnitig in $W$ is a prefix of another omnitig in $W$.
	Then, $|W| \leq nm$ and the length of any omnitig in $W$ is $O(nm)$. 
\end{restatable}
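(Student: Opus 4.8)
The plan is to prove the two assertions in turn: first that \emph{every} \ec in such a graph model has length $O(nm)$ (this part does not use the prefix-freeness of $W$), and then, using the length bound, that a prefix-free family $W$ has at most $nm$ members. Before either, I would record two consequences of the hypotheses. Since $\model(R)$ admits a circular edge-covering walk it is strongly connected; and since it is \emph{not} a single cycle it has a node of out-degree at least two (and, symmetrically, one of in-degree at least two), because a strongly connected graph all of whose out-degrees are one is a single cycle. Both facts are used crucially, and the exclusion of the single-cycle case is genuinely necessary: in a single cycle no node has an out-edge other than the one a walk uses, so the clause ``first edge different from $e_j$'' in \Cref{def:ec-omnitigs} is vacuous and arbitrarily long (indeed periodic) \ecs exist.

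For the length bound the core is to show that no edge is traversed more than $O(n)$ times by an \ec $w=(v_0,e_0,\dots,v_{t+1})$; since $w$ has $t+1$ edges this gives $t+1=O(nm)$. The mechanism is that repetitions inside an \ec create short cycles that the \ec condition forbids. Concretely I would prove a ``rigidity'' step: if a node $v$ occurs at two internal positions $p<q$ (that is, $1\le p<q\le t$) and $v$ has out-degree at least two, then combining \Cref{def:ec-omnitigs} for $(i,j)=(p,p)$ and for $(i,j)=(q,q)$ — every proper cycle through $v$ whose first edge is not $e_p$ must end with $e_{p-1}$, and likewise with $q$, such cycles existing by strong connectivity — one forces $e_{p-1}=e_{q-1}$, hence $v_{p-1}=v_{q-1}$; if instead $v$ has out-degree one then $e_p=e_q$ is automatic, hence $v_{p+1}=v_{q+1}$. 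Applied to the repeated occurrences of a fixed edge $e$ and iterated, this shows that the occurrences of $e$ drag along, in lock-step, an entire periodic stretch of $w$ whose period is a cycle of $G$; such a stretch can be prolonged in the relevant direction only until it meets an index where the \ec condition yields, from a fresh pair $(i,j)$, a proper $v_j$-$v_i$ path with first edge $\ne e_j$ and last edge $\ne e_{i-1}$ — a contradiction. Counting the distinct nodes that can participate before this happens caps the multiplicity of $e$ by $O(n)$.

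To bound $|W|$, I would partition $W$ by the first edge of each \ec; there are $m$ possibilities, so it suffices to show that at most $n$ \ecs of $W$ share a given first edge $e_0=(u,v)$. Arrange these \ecs in the prefix tree rooted at $(u,e_0,v)$; since $W$ is prefix-free they are exactly its leaves, so their number equals $1$ plus the sum, over the internal (``branching'') nodes of the tree, of (number of children $-1$). An internal node is an \ec-prefix $p=(v_0,\dots,v_k)$ admitting two distinct valid one-edge extensions, which by \Cref{obs:extend} is possible only when, for every $0\le i\le k-1$, \emph{every} proper $v_k$-$v_i$ path ends with $e_{i-1}$ — i.e.\ from $v_k$ each earlier node of $p$ can be re-entered only the way $w$ did. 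This rigidity is what I would exploit to charge the branching: each ``extra'' child is associated with the node at which its new out-edge departs, and the constraint above prevents one node of $G$ from being charged by two different branching prefixes of this family, which gives at most $n$ leaves per first edge and $|W|\le nm$ overall.

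The main obstacle is the length bound, and within it the bookkeeping that converts the local rigidity step into a clean global statement. One must handle parallel edges and self-loops, \ecs that traverse the same short cycle several times, the alternation between out-degree-one and higher-out-degree nodes along a periodic stretch, and the behaviour at the endpoints $v_0$ and $v_{t+1}$ where the \ec condition imposes nothing; it is precisely there that the guaranteed branching node must be brought in to block unbounded periodicity. The counting part, by contrast, I expect to be comparatively routine once the rigidity consequence of having two valid extensions is in hand.
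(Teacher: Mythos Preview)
Your approach is genuinely different from the paper's and considerably more intricate. The paper's argument is almost immediate once you recall \Cref{thm:edge-centric-contig2}: since an omnitig spells a safe string, it is a subwalk of \emph{every} circular edge-covering walk of $G$. The paper therefore just constructs one such walk $C$ of length at most $nm$ (order the $m$ edges arbitrarily and join consecutive ones by paths of length at most $n-1$, using strong connectivity). The bound $|W|\le nm$ is then pigeonhole: each $w\in W$ occupies at least one starting position in $C$, and prefix-freeness forbids two members of $W$ from sharing a starting position. For the length bound the paper observes that, since $G$ is not a single cycle, $C$ has a ``chord'' $e=(v_j,v_i)$ with $e\notin\{e_j,e_{i-1}\}$; no omnitig (being a subwalk of $C$) can contain both $v_j$ and $v_i$ twice as internal nodes, because the single-edge path $(v_j,e,v_i)$ would then violate \Cref{def:ec-omnitigs}, and this caps the length at $O(nm)$.

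Your direct route avoids \Cref{thm:edge-centric-contig2}, which is conceptually interesting but costly. The rigidity step is essentially right, though as written it needs the pair $(i,j)=(p,q)$ rather than just $(p,p)$ and $(q,q)$: with only the latter two you need a cycle whose first edge avoids both $e_p$ and $e_q$, which fails when the out-degree is exactly two and $e_p\ne e_q$; combining $(p,q)$ with $(q,q)$ fixes this. The periodic-stretch picture is then plausible, but as you yourself flag, the bookkeeping to extract a clean $O(n)$ multiplicity bound is the real work, and you are in effect re-deriving a consequence that \Cref{thm:edge-centric-contig2} already hands you. More seriously, the count argument has a gap: you charge each branching prefix $p$ (ending at $v_k$) to the node $v_k$ and assert that distinct branching prefixes get distinct nodes; even granting that claim, the number of leaves is $1+\sum_p(c_p-1)$ where $c_p$ is the branching degree at $p$, and bounding the number of branching prefixes by $n$ does not bound this sum by $n$ unless every $c_p=2$. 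The paper's approach sidesteps all of this by embedding every omnitig into a single short ambient walk.
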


\begin{proof}
We first show that we can visit the edges of $G = \mathcal{G}(R)$ with a circular edge-covering walk $C$ of at most $nm$ nodes. Let $e_0,\dots,e_{m-1}$ be an arbitrary order of the edges of $G$. Since we assume that $G$ admits one genomic reconstruction, then $G$ is strongly connected. Thus, from every end extremity of $e_i$ there is a path to the start extremity of $e_{(i+1) \bmod m}$, $0 \leq i \leq m-1$, of length at most $n-1$. Therefore, $C$ can be constructed to first visit $e_0$, then to follow such a path until $e_1$, and so on until $e_{m-1}$, from where it follows such a path back to $e_0$.

By \Cref{thm:edge-centric-contig2}, we have that any \ec of $G$ is a subwalk of $C$. 
We can associate every $w\in W$ with all the start positions in $C$ (in terms of nodes) where it is a subwalk.
Because $W$ does not contain walks that are prefixes of other walks, 
a position of $C$ can have at most one walk associated with it.
Since $|C| \leq nm$, $W$ can contain at most $nm$ walks.

It remains to prove that the length of any omnitig in $W$ is $O(nm)$. To simplify notation, rename $C$ as $(v_0,e_0,v_1,e_1,\dots,v_t,e_t,v_{t+1})$ with $v_{t+1} = v_0$. Since $G$ is different than a single cycle, then there exist $v_j$ and $v_i$ on $C$, such that $e = (v_j,v_i)$ is an edge of $G$, and $e \notin \{e_{j},e_{i-1}\}$. Any omnitig (thus a subwalk of $C$) cannot contain twice $v_j$ and $v_i$ as internal nodes, since otherwise the proper path $(v_j,e,v_i)$ violates the omnitig definition. Thus the length of any omnitig is $O(nm)$.
\end{proof}

(As an aside, it remains open whether the bound on $|W|$ can be reduced to $m$, which is the case for unitigs; 
our experiments from \Cref{sec:experiments} suggest this may be the case in practice.)
Note that Line 8 guarantees that $W$, prior to removal of subwalks in Line 13, satisfies the prefix condition of
\Cref{thm:number-of-edge-centric-contigs}.
\Cref{thm:number-of-edge-centric-contigs} then implies that reporting one \ec by our algorithm takes polynomial time, and there are only polynomially many omnitigs reported. 
Furthermore, removing the non-maximal omnitigs (Line 13) can be done in linear time in the sum of the omnitig lengths, by appropriately traversing a suffix tree constructed from them.
Thus, we have our main theorem:
\begin{restatable}{theorem}{thmncssalgorithm}
Let $R$ be a set of reads and let $\model(R)$ be an edge-centric graph model. \Cref{alg:complete-contigs} outputs in polynomial time all safe strings of $\model(R)$.
\label{thm:nc-ss-algorithm}
\end{restatable}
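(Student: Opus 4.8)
The plan is to establish two things: (a) after Line~13 the set $W$ is exactly the set of maximal \ecs of $G = \model(R)$, so that $\setof{\spell(w)}{w\in W}$ is exactly the set of maximal safe strings of $\model(R)$ by \Cref{thm:edge-centric-contig2}; and (b) every step of \Cref{alg:complete-contigs} runs in time polynomial in $n$ and $m$. Since the set of all safe strings is precisely the set of substrings of the maximal safe strings, (a) and (b) together prove the theorem.

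For correctness I would first maintain the invariant that every walk passed to ${\bf extend}$ is an \ec. The base case holds because each single edge is trivially an \ec. For the inductive step, suppose ${\bf extend}(w)$ is called with $w$ an \ec of length $t$, and let $w'$ be the extension of $w$ by an out-edge $e$ of $v_t$; by \Cref{obs:extend}, $w'$ is an \ec iff condition~(ii) of that observation holds. The core claim is that the reachability test on Lines~4--6 decides exactly this: if a $v_t$-$v_i$ path violating~(ii) exists, the test detects a path from $v_t$ into $X$ in $G'$; conversely, if the test finds such a path $p$, then either $p$ extends to a violating path, or---in the delicate case where $p$ already passes through $v_i$---one extracts from $p$ a proper $v_j$-$v_i$ subpath contradicting the hypothesis that $w$ is an \ec, exactly as in the discussion preceding the statement. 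Consequently, for a fixed starting edge $e_0$, the recursion rooted at $e_0$ generates precisely the \ecs of $G$ whose first edge is $e_0$, and a generated walk enters $W$ (before Line~13) iff it is right-maximal, i.e.\ cannot be extended by any out-edge of its last node into an \ec. A maximal \ec is in particular right-maximal, hence generated and placed into $W$, and being maximal it is not a subwalk of any other \ec, so it survives Line~13; conversely everything in $W$ is an \ec by the invariant. Thus after Line~13, $W$ is exactly the set of maximal \ecs, and \Cref{thm:edge-centric-contig2} converts this into the statement about maximal safe strings. I would also record here that $W$ is prefix-free before Line~13 (if $w_1$ were a prefix of $w_2$, then processing $w_1$ would have extended it via the recursive call, contradicting its inclusion), as this is needed for the timing bound.

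For the running time I would invoke \Cref{thm:number-of-edge-centric-contigs}: since $W$ before Line~13 is prefix-free and $\model(R)$ is not a single cycle, $|W| \le nm$ and every \ec in $W$ has length $O(nm)$. Hence the recursion forest has at most $nm$ leaves and depth $O(nm)$, so $O(n^2m^2)$ nodes, i.e.\ at most that many calls to ${\bf extend}$. Each call builds $X$ in $O(n+m)$ time and, for each of the at most $m$ out-edges of $v_t$, performs one reachability computation in $G'$ in $O(n+m)$ time, for $O(m(n+m))$ per call and $\mathrm{poly}(n,m)$ overall. Line~13 costs time linear in $\sum_{w\in W}|w| = O(n^2m^2)$ via a suffix tree over the walks, and spelling the survivors on Line~14 is likewise polynomial; summing gives the claim.

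The step I expect to be the main obstacle is the correctness of the reachability test on Lines~4--6, specifically its reverse direction: ruling out that the discovered path $p$ ``sneaks through'' $v_i$ and therefore fails to extend into a witness violating condition~(ii), which is exactly where one must use that the walk being extended is already an \ec. By comparison, the structural induction, the translation via \Cref{thm:edge-centric-contig2}, and the timing---which rests on the already-established \Cref{thm:number-of-edge-centric-contigs}---are routine bookkeeping.
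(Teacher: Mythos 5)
Your proposal is correct and takes essentially the same route as the paper: an invariant that \textbf{extend} is only called on \ecs, correctness of the extension test via \Cref{obs:extend} together with the case analysis on whether the discovered path already passes through $v_i$ (resolved by extracting a $v_j$-$v_i$ subpath contradicting that $w$ is an \ec), prefix-freeness of $W$ before Line~13 so that \Cref{thm:number-of-edge-centric-contigs} bounds the number and length of generated walks, and a suffix-tree pass for Line~13. Your bookkeeping of the recursion-forest size is in fact slightly more explicit than the paper's, but the argument is the same.
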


Finally, we note some implementation details that are crucial in practice.
Prior to starting, we apply the Y-to-V algorithm and the standard graph compaction algorithm to compact 
unitigs~\cite{bcalm}.
This significantly reduces the number of nodes/edges in the graph 
without changing the maximal safe strings.
We also precompute all \ecs of length two and store them in a hash table, so that 
whenever we want to extend the \ec $w$ in Line 6, we check beforehand whether the pair $(e_{t-1},e)$ is stored in the hash table. 
This significantly limits, in practice, the number of graph traversals we have to do at Line 6.
Finally, we do not compute the set $X$ every time, but instead incrementally built it up as we extend the omnitig $w$.  Our implementation is freely available for use\footnote{\tt\url{https://github.com/alexandrutomescu/complete-contigs}}.

\section{Node-centric model}\label{sec:node-centric-model}
In this section we obtain analogous results for node-centric models, though both the definitions, algorithms, and proofs need to modified. 
The following definition is similar to the one for the edge-centric model, the only addition being its second bullet 
(see Figure~\ref{fig:omnitig-node-centric} for an illustration).

\begin{definition}[\Ec, node-centric model]
Let $G$ be a directed graph and let $w = (v_0,e_0,v_1,e_1,\dots,v_t,e_t,v_{t+1})$ be a walk in $G$. We say that $w$ is a {\em \ec} iff the following two conditions hold:
\begin{itemize}
\item for all $1 \leq i \leq j \leq t$, there is no proper $v_j$-$v_i$ path with first arc different from $e_j$, and last arc different from $e_{i-1}$.
\item for all $0 \leq j \leq t$, the arc $e_j$ is the only $v_j$-$v_{j+1}$ path.
\end{itemize}
\end{definition}

\begin{figure}
\begin{center}
\scalebox{0.9}{
\begin{tikzpicture}[scale=1.4,->]
\draw[clip,draw=none] (-.5,-1.2) rectangle (7.5,0.5);
\node[label={above:$v_0$}] (0) at (0,0) {};
\node (1) at (1,0) {};
\node[label={above:$v_i$}] (2) at (2,0) {};
\node (3) at (3,0) {};
\node (4) at (4,0) {};
\node[label={above:$v_j$}] (5) at (5,0) {};
\node (6) at (6,0) {};
\node[label={[yshift=-6pt]above:$v_{t+1}$}] (7) at (7,0) {};
\draw[ultra thick] (0) to node[draw=none,fill=none,above] {\small $e_0$} (1);
\draw[ultra thick] (1) to node[draw=none,fill=none,above,yshift=-.15cm] {\small $e_{i-1}$} (2);
\draw[ultra thick] (2) to node[draw=none,fill=none,above] {\small $e_i$} (3);
\draw[ultra thick] (3) to 
(4);
\draw[ultra thick] (4) to node[draw=none,fill=none,above,yshift=-.15cm] {\small $e_{j-1}$} (5);
\draw[ultra thick] (5) to node[draw=none,fill=none,above] {\small $e_j$} (6);
\draw[ultra thick] (6) to node[draw=none,fill=none,above] {\small $e_t$} (7);
\node (2') at (1.5,-0.7) {};\draw (2') to (2);
\node (5') at (5.5,-0.7) {};\draw (5) to (5');
\draw[densely dashed,bend left=20] (5') to (2');
\draw[densely dashed,bend right=50] (5) to (6);
\end{tikzpicture}}
\end{center}
\caption{An illustration of the omnitig definition, node-centric model\label{fig:omnitig-node-centric}}
\end{figure}
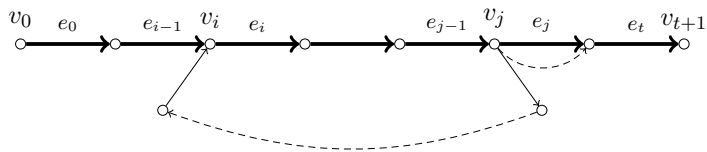

The following theorem is analogous to \Cref{thm:edge-centric-contig2}, and characterizes the safe strings in the node-centric model.

\begin{theorem}
Given a node-centric graph model $G = \model(R)$ built for a set of reads $R$, and a string $s$, the following three statements are equivalent:
\begin{enumerate}
\item[\rm (1)] $s$ is a safe string for $G$;
\item[\rm (2)] $s$ is spelled by a walk $w = (v_0,e_0,v_1,e_1\dots,v_t,e_t,v_{t+1})$ in $G$ and $w$ is a $\ec$;
\item[\rm (3)] $s$ is spelled by a walk $w = (v_0,e_0,v_1,e_1\dots,v_t,e_t,v_{t+1})$ in $G$ and $w$ satisfies: for all $0 \leq j \leq t$, the arc $e_j$ of $w$ is the only $v_j$-$v_{j+1}$ path, and for all $1 \leq j \leq t$, all proper $v_j$-$v_j$ (circular) walks $w'$ fulfill at least one of the following conditions:
\begin{itemize}
\item[\rm (i)] the subwalk $(v_j,e_j,\dots,v_t,e_t,v_{t+1})$ of $w$ is a prefix $w'$, or
\item[\rm (ii)] the subwalk $(v_0,e_0,\dots,v_{j-1},e_{j-1},v_{j})$ of $w$ is a suffix of $w'$, or
\item[\rm (iii)] $w$ is a subwalk of $w'$.
\end{itemize}
\end{enumerate}
\label{thm:node-centric-contig}
\end{theorem}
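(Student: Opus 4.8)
The plan is to adapt the proof of \Cref{thm:edge-centric-contig2}, following the same cyclical chain of implications (1) $\Rightarrow$ (2) $\Rightarrow$ (3) $\Rightarrow$ (1), while carefully threading in the extra "node-uniqueness" condition (the requirement that $e_j$ is the only $v_j$-$v_{j+1}$ path) and replacing "edge-covering" by "node-covering" throughout. For (1) $\Rightarrow$ (2): assume $s$ is safe, so by the genome-graph property it is spelled by a unique walk $w$, and every circular node-covering walk $A$ contains $w$ as a subwalk. If the first bullet of the node-centric omnitig definition fails, I reuse verbatim the rerouting argument of the edge-centric proof: given a proper $v_j$-$v_i$ path $p$ with first edge $\neq e_j$ and last edge $\neq e_{i-1}$, reroute $A$ through $p$ then through $(v_i,e_i,\dots,v_j)$; this is still node-covering (we only added nodes) and destroys the occurrence of the subwalk $w' = (v_{i-1},e_{i-1},\dots,v_{j+1})$, contradicting safeness. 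If instead the second bullet fails, there is some $j$ and a $v_j$-$v_{j+1}$ path $q \neq e_j$; now reroute $A$ so that whenever it would traverse $e_j$ it instead takes $q$ — this still covers every node (indeed $q$ only visits nodes, and $A$ already covered $v_j,v_{j+1}$), but it removes every occurrence of the length-one subwalk $(v_j,e_j,v_{j+1})$ of $w$, again contradicting safeness; here I must be slightly careful if $w$ has length $0$, but then there is nothing to prove.

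For (2) $\Rightarrow$ (3): this is almost immediate. The node-uniqueness condition appears identically in the hypothesis and the conclusion, so it carries over for free. For the remaining part — that all proper $v_j$-$v_j$ walks $w'$ satisfy (i), (ii), or (iii) — I invoke the edge-centric argument unchanged, since the first bullet of the node-centric omnitig definition is exactly the edge-centric omnitig condition, and the minimal-counterexample / cycle-removal reasoning in the proof of (2) $\Rightarrow$ (3) of \Cref{thm:edge-centric-contig2} used only that condition.

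For (3) $\Rightarrow$ (1): let $A$ be a circular node-covering walk; I must show $w$ is a subwalk of $\spell(A)$, i.e. (by the genome-graph property) $w$ is a subwalk of $A$. The subtlety relative to the edge-centric case is that $A$ need not traverse $e_0$, since it only covers nodes. So I argue as follows: $A$ covers $v_0$, hence $w_1 = (v_0,e_0,v_1)$ is a subwalk of $A$ — because $A$ enters $v_0$ and then leaves it, and the node-uniqueness condition ($e_0$ is the only $v_0$-$v_1$ path, so in particular there is no parallel edge and no alternative short route) forces the step out of $v_0$ toward $v_1$ to be exactly $e_0$; more precisely, since $A$ visits $v_1$ too and the only $v_0$-$v_1$ path is $e_0$, $A$ must at some point traverse $e_0$. (If $w$ has length $0$ this is trivial.) From here I take $w_j = (v_0,e_0,\dots,v_j)$ to be the longest prefix of $w$ that $A$ ever traverses, with $j \ge 1$, and assume for contradiction $j \neq t+1$. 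Now I need $A$ to eventually traverse $e_j$ after a visit to $v_j$: since $A$ visits $v_{j+1}$ (node-covering) and the unique $v_j$-$v_{j+1}$ path is $e_j$, every time $A$ moves from the "$v_j$ side" to $v_{j+1}$ it uses $e_j$, so $e_j$ is indeed traversed. Then I mimic the edge-centric proof: let $w'$ be the subwalk of $A$ between the last two occurrences of $v_j$ before $A$ traverses $e_j$; $w'$ is a proper $v_j$-$v_j$ walk, so by (3) one of (i), (ii), (iii) holds, and each leads to the same contradiction as in the edge-centric proof — (i) contradicts that $w'$ precedes the next occurrence of $e_j$; (ii) and (iii) each extend the prefix of $w$ traversed by $A$, contradicting maximality of $w_j$.

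The main obstacle will be (1) $\Rightarrow$ (2), specifically the rerouting used to prove necessity of the second bullet: I need to confirm that replacing traversals of $e_j$ by the alternative $v_j$-$v_{j+1}$ path $q$ yields a walk that is still circular and still node-covering, and that after this surgery \emph{no} occurrence of the subwalk $(v_j,e_j,v_{j+1})$ of $w$ survives — the delicate point being occurrences of $e_j$ that are "internal" to $q$ itself are impossible since $q$ is a path of length $\ge 2$ that does not use $e_j$ (otherwise $e_j$ would be a proper sub-path contradicting that $q$ is a path), and occurrences straddling the boundary between $q$ and the rest of $A$ are handled exactly as the edge-centric proof handles straddling occurrences along the rerouted path $p$. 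Everything else is a faithful transcription of the edge-centric arguments with "edge-covering" weakened to "node-covering," which only \emph{helps} in the constructions (added detours are automatically still node-covering) and only requires the small extra care above in (3) $\Rightarrow$ (1) to guarantee the relevant edges of $w$ are actually traversed by $A$.
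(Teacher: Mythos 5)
Your proposal is correct and follows essentially the same strategy as the paper: the cyclic chain (1) $\Rightarrow$ (2) $\Rightarrow$ (3) $\Rightarrow$ (1), with (1) $\Rightarrow$ (2) split into two rerouting arguments (one per bullet of the node-centric omnitig definition) and (2) $\Rightarrow$ (3) deferred verbatim to the edge-centric proof. The one place where you genuinely reorganize the argument is (3) $\Rightarrow$ (1): the paper keeps $j \geq 0$ and splits on whether $A$ revisits $v_j$ before reaching $v_{j+1}$, invoking the uniqueness-of-$e_j$ condition \emph{inside} the contradiction (the ``no revisit'' case and the $j=0$ case each yield, after cycle removal, a second $v_j$-$v_{j+1}$ path); you instead use that condition \emph{up front} to show that $A$ must actually traverse $e_j$ once it is known to traverse the prefix ending at $v_j$ (cycle removal applied to the $v_j$-to-$v_{j+1}$ subwalk of $A$ yields the unique path $(v_j,e_j,v_{j+1})$, whose edge is necessarily an edge of $A$), after which the edge-centric contradiction argument applies unchanged. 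Both routes work; yours buys a cleaner reduction to the edge-centric case at the cost of the extra observation that path uniqueness forces edge traversal. Two small points to tidy: (a) in the second rerouting of (1) $\Rightarrow$ (2), the alternative $v_j$-$v_{j+1}$ path $q$ need not have length $\geq 2$ --- it can be a parallel arc --- and the correct reason $q$ cannot contain $e_j$ is the paper's: a $v_j$-$v_{j+1}$ path traversing $e_j$ other than as its entirety would visit $v_j$ or $v_{j+1}$ twice; (b) both your argument and the paper's silently assume $v_j \neq v_{j+1}$ when removing cycles from a $v_j$-$v_{j+1}$ walk, so the degenerate self-loop case is glossed over equally on both sides.
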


We analogously prove Theorem~\ref{thm:node-centric-contig} by proving the cyclical sequence of implications (1) $\Rightarrow$ (2) $\Rightarrow$ (3) $\Rightarrow$ (1).

\begin{proof}[Proof of {\rm (1)} $\Rightarrow$ {\rm (2)}]
Assume that $s$ is a safe string for $R$. 
By definition of a graph model, a safe string for $R$ is spelled by a unique walk in a node-centric model for $R$. 
Let this walk be $w$, and let $A$ be a circular node-covering walk of $G$ (thus containing $w$ as subwalk).

First, assume for a contradiction that there exist $1 \leq i \leq j \leq t$, and a proper $v_j$-$v_i$ path $p$ whose first arc is different from $e_j$ and its last arc is different from $e_{i-1}$. From $A$, we can construct another circular node-covering walk $B$ of $G$ which does not contain $w$ as subwalk, and thus $\spell(B)$ does not contain $s$ as sub-string. This will contradict the fact that $s$ is a safe string for $G$. 

Whenever $A$ visits node $v_j$, then $B$ follows the $v_j$-$v_i$ path $p$, then it follows $(v_i,e_i,\dots,e_{j-1},v_j)$, and finally continues as $A$. To see that $w$ does not appear as a subwalk of $B$, consider the subwalk $w' = (v_{i-1},e_{i-1},v_i,e_i,\dots,e_{j-1},v_j,e_j,v_{j+1})$ of $w$ (recall that $1 \leq i \leq j \leq t$). Since $p$ is proper, and its first arc is different from $e_j$ and its last arc is different from $e_{i-1}$, then, by construction, the only way that $w'$ can appear in $B$ is as a subwalk of $p$. However, this implies that both $v_j$ and $v_i$ appear twice on $p$, contradicting the fact that $p$ is a $v_j$-$v_i$ path.

Second, assume for a contradiction that there is some $0 \leq j \leq t$ and another $v_j$-$v_{j+1}$ path $p'$ than the arc $e_j$. Just as above, from $A$ we can construct another node-covering walk $C$ that avoids $w$ (and thus $\spell(C)$ does not contain $s$ as sub-string) as follows. Whenever $A$ traverses the arc $e_j$, $C$ traverses instead $p'$. The walk $C$ is node-covering because it still covers all nodes of $w$, and otherwise $C$ coincides with $A$. However, it does not contain $w$ as subwalk because $p'$ is different from the arc $e_j$, and, since it is a path, it cannot pass through $e_j$ again, as otherwise it would visit twice either $v_j$, $v_{j+1}$, or both.
\end{proof}

The proof of {\rm (2)} $\Rightarrow$ {\rm (3)} is identical to the corresponding proof of {\rm (2)} $\Rightarrow$ {\rm (3)} for \Cref{thm:edge-centric-contig2}.

\begin{proof}[Proof of {\rm (3)} $\Rightarrow$ {\rm (1)}]
Assume $w$ satisfies (3), and let $A$ be a circular node-covering walk of $G$. We need to show that $w$ is a subwalk of $A$. Let $w_j = (v_0,e_0,\dots,v_{j-1},e_{j-1},v_j)$ be the longest prefix of $w$ that $A$ ever traverses, ending at some $v_j$. Since $A$ covers all nodes, then $j \geq 0$. Suppose for a contradiction that $j \neq t+1$. Since $A$ is circular and covers all nodes of $G$, then after traversing $w_j$, the walk $A$ eventually visits the node $v_{j+1}$. The walk $A$ may, or may not, visit $v_j$ again before visiting the node $v_{j+1}$. 

First, suppose that after visiting $v_j$ at the end of $w_j$, $A$ visits again $v_j$ before visiting $v_{j+1}$. Let $w'$ denote the subwalk of $A$ between the last two occurrences of $v_j$ before visiting the node $v_{j+1}$. If $1 \leq j \leq t$, since $w'$ is a $v_j$-$v_j$ walk, and $w$ satisfies (3), we have that either:
\begin{itemize}
\item the walk $(v_j,e_j,v_{j+1},\dots,v_t,e_t,v_{t+1})$ is a prefix of $w'$: this contradicts the fact that $w'$ is a subwalk of $A$ between $v_j$ and the immediately next occurrence of $v_{j+1}$, since in this case $w'$ would contain $v_{j+1}$ more times;
\item the walk $(v_0,e_0,\dots,v_{j-1},e_{j-1},v_{j})$ is a suffix of $w'$: this implies that $(w',e_j,v_{j+1})$ is a longer prefix of $w$ that is a subwalk of $A$, contradicting the maximality of $w_j$;
\item the walk $w$ appears on $w'$: since $w'$ is a subwalk of $A$, this implies that also $w$ is a subwalk of $A$, contradicting again the maximality of $w_j$.
\end{itemize}
If $j=0$, then by removing all cycles from $w'$, we obtain a $v_j$-$v_{j+1}$ path, different than the arc $e_0$, since otherwise we would contradict the maximality of $w_j$. But this contradicts the fact that $w$ is satisfies (3).

Second, suppose that the walk $A$ does not visit $v_j$ again after $w_j$ and before visiting $v_{j+1}$. Let $w''$ be the $v_j$-$v_{j+1}$ subwalk of $A$ between $w_j$ and this next occurrence of $v_{j+1}$. The walk $w''$ may not be a path, but by removing all cycles from it we obtain a $v_j$-$v_{j+1}$ path $w'''$. This path is different from $e_j$ by the maximality of $w_j$, contradicting again the fact that $w$ satisfies (3).
\end{proof}

Analogous to the edge-centric case, we can prove the following polynomial upper-bound on the number and length of all omnitigs.

\begin{lemma}\label{thm:number-of-node-centric-contigs}
	Let $W$ be a set of omnitigs in a node-centric graph model $\model(R)$, whose genome graph is different than a single cycle.
	Furthermore, suppose no omnitig in $W$ is a prefix of another omnitig in $W$.
	Then, $|W| \leq n^2$ and the length of any omnitig in $W$ is $O(n^2)$.
\end{lemma}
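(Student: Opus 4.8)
The plan is to imitate, almost line for line, the proof of \Cref{thm:number-of-edge-centric-contigs}, replacing the edge‑covering walk by a node‑covering one and the bound $nm$ by $n^2$. First I would construct a single circular node‑covering walk $C$ of $G=\model(R)$ using at most $n^2$ node occurrences. Since $\model(R)$ is assumed to contain a genomic reconstruction, i.e.\ a circular node‑covering walk, $G$ is strongly connected. Fixing an arbitrary ordering $u_0,u_1,\dots,u_{n-1}$ of the nodes of $G$, for each $0\le r\le n-1$ there is a path from $u_r$ to $u_{(r+1)\bmod n}$ of length at most $n-1$; concatenating these $n$ paths gives a circular walk $C$ that visits every node of $G$ and has at most $n(n-1)<n^2$ edges, hence at most $n^2$ node occurrences. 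In particular $C$ is itself a genomic reconstruction of the node‑centric model.

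Next I would invoke \Cref{thm:node-centric-contig}. Every omnitig $w$ of $G$ spells a safe string (equivalence (1)$\Leftrightarrow$(2) there), so $\spell(w)$ is a substring of $\spell(C)$, and by condition (2) of \Cref{def:genome-graph} this means $w$ is a subwalk of the circular walk $C$. From here the counting argument is identical to the edge‑centric one: associate to each $w\in W$ the set of node positions of $C$ at which a copy of $w$ starts; because no omnitig in $W$ is a prefix of another, at most one element of $W$ can start at a fixed position of $C$, and since $C$ has at most $n^2$ positions we obtain $|W|\le n^2$. (Note this half does not use the ``not a single cycle'' hypothesis, exactly as in the edge‑centric case.)

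The remaining and, I expect, most delicate step is bounding the length of a single omnitig. Rename $C=(v_0,e_0,v_1,\dots,v_t,e_t,v_{t+1})$ with $v_{t+1}=v_0$. The idea, following the edge‑centric proof, is that because $G$ is not a single cycle it has a ``chord'' relative to $C$: there are positions $j$ and $i$ on $C$ and an edge $e=(v_j,v_i)$ of $G$ with $e\neq e_j$ and $e\neq e_{i-1}$. Then $(v_j,e,v_i)$ is a proper path of the kind forbidden by the omnitig definition whenever the omnitig visits $v_j$ and $v_i$ as internal nodes at occurrences carrying those $C$‑edges; since a subwalk of $C$ that is longer than a fixed multiple of $|C|\le n^2$ must wrap around $C$ often enough to create exactly such repeated internal occurrences (with the required index order $i\le j$), no omnitig can have length more than $O(|C|)=O(n^2)$. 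The subtle parts to get right are (a) exhibiting the chord — here one may need to choose $C$, or the positions $i,j$, with a little care rather than taking a completely arbitrary node‑covering walk, since nodes of in‑ or out‑degree one constrain which edges can fail to be the relevant $C$‑edge — and (b) the index bookkeeping ensuring both repeated occurrences are strictly internal and correctly ordered for \Cref{def:ec-omnitigs} and its node‑centric analogue. The second clause of the node‑centric omnitig definition (that $e_j$ is the only $v_j$–$v_{j+1}$ path) plays no role in this length estimate and can be ignored here.
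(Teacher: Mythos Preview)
Your proposal is correct and is precisely the approach the paper intends: the paper does not write out a proof of this lemma at all, merely stating that it is ``analogous to the edge-centric case,'' and your line-by-line translation (node-covering walk of size $\le n^2$ in place of edge-covering walk of size $\le nm$, then \Cref{thm:node-centric-contig} in place of \Cref{thm:edge-centric-contig2}) is exactly that analogy. Your caution about points (a) and (b) is well placed --- the chord/index bookkeeping is already somewhat informal in the paper's edge-centric proof --- but the overall strategy is the intended one.
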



We also leave open the question whether the bound on the number of maximal omnitigs in the node-centric model can be reduced to $n$. We now combine \Cref{thm:node-centric-contig} and \Cref{thm:number-of-node-centric-contigs} for obtaining our polynomial-time safe and complete assembly algorithm.

\begin{restatable}{theorem}{thmncssalgorithmec}
There is a safe and complete assembly algorithm for any node-centric graph model $\model(R)$ built on a set $R$ of reads, which runs in polynomial time.
\label{thm:nc-ss-algorithm-nc}
\end{restatable}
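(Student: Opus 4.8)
The plan is to adapt \Cref{alg:complete-contigs} to the node-centric model, mirroring the derivation of \Cref{thm:nc-ss-algorithm} but accounting for the extra condition in the node-centric \ec definition, namely that each $e_j$ be the only $v_j$-$v_{j+1}$ path. First I would record the node-centric analogue of \Cref{obs:extend}: a walk $w'=(v_0,e_0,\dots,e_{t-1},v_t,e,y)$ is an \ec if and only if (i) $w=(v_0,e_0,\dots,v_t)$ is an \ec, (ii) for every internal index $i$ of $w'$ there is no proper $v_t$-$v_i$ path with first edge $\neq e$ and last edge $\neq e_{i-1}$, and (iii) $e$ is the only $v_t$-$y$ path in $G$. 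This is immediate from the definition, since appending one edge adds exactly the constraints indexed by the new last internal node. Two consequences matter: every prefix of an \ec is an \ec (its constraint set is a subset), so every maximal \ec is obtained by appending edges one at a time from a single-edge \ec; and, unlike the edge-centric case, a single edge $(u,e,v)$ is an \ec only when $e$ is the only $u$-$v$ path, which (having no parallel edges in the models considered) is equivalent to $v$ being unreachable from $u$ in $G-e$. Since such an edge can also never occur inside any \ec, it suffices to seed the search from exactly these edges.

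Next I would describe the modified algorithm and its correctness. When extending the current \ec $w=(v_0,\dots,v_t)$ by an edge $e=(v_t,y)$, we keep the reachability test of Lines~4--6 (no path in $G-e$ from $v_t$ into $X=\bigcup_{1\le i\le t}(N^-(v_i)\setminus\{v_{i-1}\})$, which the discussion after \Cref{alg:complete-contigs} shows is equivalent to (ii)) and add one further test: $y$ must be unreachable from $v_t$ in $G-e$, which by the observation that any $v_t$-$y$ path other than $e$ itself must avoid $e$ is equivalent to (iii); equivalently one simply replaces $X$ by $X\cup\{y\}$. Lines~11--12 are restricted to the seed edges identified above. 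With these changes, the correctness argument is verbatim the one following \Cref{alg:complete-contigs}, now invoking the node-centric analogue of \Cref{obs:extend} together with \Cref{thm:node-centric-contig}: the recursion visits precisely the \ecs reachable by right-extension from a seed, never prunes a prefix of an \ec, and places in $W$ exactly the right-maximal \ecs, so after Line~13 the walks in $W$ are exactly the maximal \ecs; hence $\setof{\spell(w)}{w\in W}$ is exactly the set of maximal safe strings of $\model(R)$, and every safe string is a substring of one of them.

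Finally I would bound the running time. As in the edge-centric case, Line~8 ensures that no walk in $W$ prior to Line~13 is a prefix of another, so \Cref{thm:number-of-node-centric-contigs} applies, giving $|W|\le n^2$ and length $O(n^2)$ for every \ec produced. The recursion tree has these $O(n^2)$ walks as leaves and proper-prefix \ecs as internal nodes, so its total size is at most $\sum_{w\in W}|w|=O(n^4)$; each recursive call performs one reachability query per out-edge of $v_t$, each costing $O(n+m)$, and the set $X$ is maintained incrementally, so the per-call cost is polynomial. Removing non-maximal walks in Line~13 is linear in $\sum_{w\in W}|w|$ via a suffix tree, and emitting the spelled strings is linear in their total length; hence the whole procedure is polynomial. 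The only real obstacle is getting the node-centric extension lemma exactly right — in particular spotting that the base case changes (not every edge is an \ec) and that condition (iii) is discharged by a single extra reachability query in $G-e$; once that is in place, everything else transfers from the edge-centric development essentially unchanged.
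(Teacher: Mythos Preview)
Your proposal is correct and follows the same approach as the paper: adapt \Cref{alg:complete-contigs} by adding, at each extension step, a check that the newly appended edge is the only path between its endpoints (the second bullet of the node-centric \ec definition), and otherwise reuse the edge-centric correctness and polynomiality arguments via \Cref{thm:node-centric-contig} and \Cref{thm:number-of-node-centric-contigs}. The paper's own proof is far terser (one sentence stating that the only change is a single extra graph traversal for the last edge added), whereas you spell out the node-centric analogue of \Cref{obs:extend}, note explicitly that the base case changes (not every edge is a single-edge \ec, so seeding must be restricted), and give a concrete running-time bound; these additions are sound refinements rather than a different route.
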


\begin{proof}
The proof is identical to the one for the edge-centric case (\Cref{alg:complete-contigs} and \Cref{thm:nc-ss-algorithm}). The only difference that needs to be made to \Cref{alg:complete-contigs} is to check that the second bullet in the definition of \ec for the node-centric case holds. This can be similarly performed by a single graph traversal, and only for the last edge added to the omnitig.
\end{proof}

\section{Experimental results\label{sec:experiments}}
We wanted to test the potential of omnitigs as an alternative to unitigs,
under the assumptions of \Cref{sec:formulation}.
We chose two genomes: one bacterial genome, \ecoli, and one larger genome, Human chr10 (circularized).
The graph model was the edge-centric de Bruijn graph built on the set of all \kpomers of the genome. 
We used $k=31$ and $k=55$ for \ecoli and chr10, respectively, according 
to what has been used in practice for the assembly of such genomes.


We wanted to measure the effect of omnitigs on assembly contiguity in terms of 
(1) increase in contig length, and (2) increase of biological context for elements of interest. 
To measure the increase in length, we measured the average contig length and the E-size.
Since multiple contigs can cover overlapping regions, 
we found the E-size metric~\cite{GAGE} to be more appropriate than the N50 metric.
The E-size of a set of substrings of a genome is defined as the average,
over all genomic positions $i$,
of the mean length of all substrings spanning position $i$. 
This was computed by aligning the contigs to the reference.
\Cref{table:results} shows that omnitigs exhibit significantly more contiguity than unitigs,
with an average contig length that is 62-82\% higher.
There is very little improvement in the E-size (1-4\%), indicating
that most of the improvement come from increasing the length of shorter contigs.

\begin{table}[t]
\caption{Results for $\dbec^{k}(R)$, where $R$ is the set of all \kpomers of the genome. 
\label{table:results}}
\small
\begin{center}
\bgroup \def\arraystretch{1.3} 
\setlength\tabcolsep{3pt} 
	
\begin{tabular}{|l|r|r|r|r|r|r|r|r|}\hline 
	& \multicolumn{4}{|c|}{\ecoli ($k=31$)} & \multicolumn{4}{|c|}{chr10 ($k=55$)} \\
	\hline
& \# strings & avg len & E-size & time (s) & \# strings & avg len & E-size & time (s) \\
\hline \hline
unitigs  & 1,743 &	2,654 &	33,309 & $<1$ & 259,845 &	546 &	8,344 & $1$  \\
Y-to-V   & 1,004 &	4,682 &	33,632 & $<1$ & 159,101 &	878 &	8,376 & $2$ \\
omnitigs & 983 &	4,832 &	34,557 & $<1$ & 158,236 &	887 &	8,401 & $1,046$ \\
\hline 
\end{tabular}
\egroup
\end{center}
\end{table}

\begin{figure}[ht]
\centering
\begin{tabular}{cc}
	\includegraphics[scale=0.45]{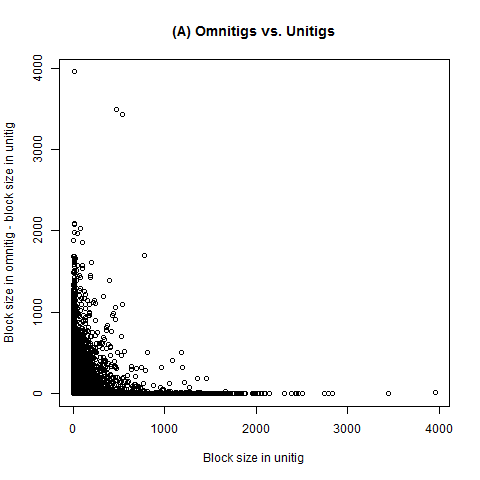} &
	\includegraphics[scale=0.45]{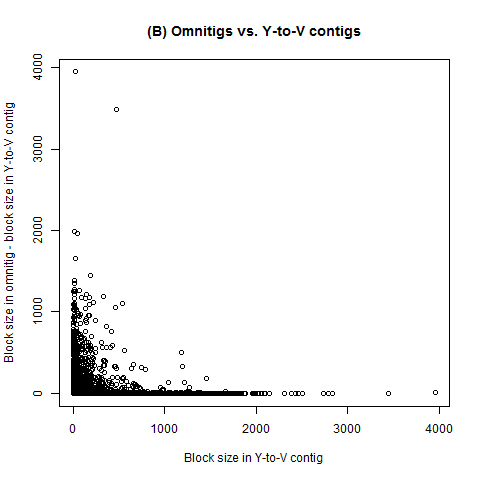} 
\end{tabular}
	\vspace{-7pt}
\caption{
	The increase in SNP block size in omnitigs compared to unitigs (A) and Y-to-V contigs (B).
	Each point is a SNP, and the x-value is the block size of the unitig (in A) or Y-to-V contig (in B) covering it.
	The y-value is the increase in the block size, when compared with omnitigs.
	Note that the y-axis does not represent the block size, but a difference of block sizes.
\label{fig:snps}}
\end{figure}

We wanted to also measure the potential of omnitigs to improve downstream biological analysis, relative to unitigs.
Longer contigs can provide more flanking context around important genomic elements such as SNPs.
One general type of study collects statistics about the relationship of each SNP to other SNPs on the same contig; 
such a study is necessarily limited by the number of SNPs present on the same contig~\cite{Uricaru30012015}. 
We call this number the {\em block size} of a SNP. 
To see the effect of omnitigs on such a study, 
we identified chr10 locations of SNPs in the human population (using dbSNP), and
the block size of each SNP in the omnitig vs. the unitig algorithms.
\Cref{fig:snps}A shows that omnitigs in many cases provide more SNP context.
The number of SNPs whose block size increased was $\sim 1.7$ million  (out of $\sim5.9$ million) and whose
block size increased by more than 10 was $\sim 137$ thousand.
The average number of SNPs per omnitig was 41, with only 26 per unitig. 
Consistent with the contiguity results of~\Cref{table:results}, 
the effect is more pronounced on contigs with less SNPs on them.

We also compared omnitigs to Y-to-V contigs.
Y-to-V contigs have been proposed in the literature~\cite{MGMB07,jacksonthesis,kingsford10},
but, to the best of our knowledge, 
there has not been a quantitative study comparing their merits against other contig algorithms.
Omnitigs also provide more SNP context than Y-to-V contigs, 
with $\sim266$ thousand SNPs having an increase in block size (\Cref{fig:snps}B).
Omnitigs are only marginally better than Y-to-V contigs in terms of average contiguity~(\Cref{table:results}).
Our results suggest that, though not as beneficial as omnitigs, Y-to-V contigs
may nevertheless provide a better alternative to unitigs that is faster than the omnitig algorithm.

\Cref{table:results} also shows the wall-clock running times of our algorithms.
The experiments were run on a node with two Xeon 2.53 GHz CPUs. 
We parallelized the omnitig
algorithm so that it utilized all 8 available cores. 
We observe negligible running times for all algorithms on \ecoli.
On chr10, the running time of the omnitig algorithm is significantly longer (by 18mins) 
than the unitig or Y-to-V algorithm,
though it would still not form a bottleneck in an assembly pipeline.
The memory usage did not exceed 1 GB at any point, 
though we believe it can be significantly reduced with a more careful implementation.

\section{Conclusion}
There are two natural directions for future work: practical and theoretical.
In the practical direction, the omnitig algorithm should be extended to handle the complexities of real data such as sequencing errors, imperfect coverage, 
linear genomes, and double-strandedness.
This is a non-trivial task which is outside the scope of the current study, 
but it will be important in facilitating the application to genome analysis and assembly.
In the theoretical direction, we believe that omnitigs exhibit more structure that can be exploited in a faster algorithm for finding all maximal omnitigs. 
We are also currently studying the graph model which a genomic reconstruction is any collection of circular walks that together cover all nodes/edges of the graph (as in metagenomic sequencing of bacteria). 
We are also studying the class of genome graphs admitting a single safe walk covering all of their nodes or edges, question related to the ones about unique reconstructions.


\section{Acknowledgments}

We would like to thank Daniel Lokshtanov for initial discussions, Rayan Chikhi for feedback on the manuscript, and Nidia Obscura Acosta for very helpful discussion on the manuscript.
This work was supported in part by NSF awards DBI-1356529, IIS-1453527, and IIS-1421908 to PM, and by Academy of Finland grant 274977 to AT.

\section{Author Disclosure Statement}

No competing financial interests exist.

\bibliographystyle{splncs_srt}
\bibliography{arxiv-full-version}

\end{document}